\newtheorem{thm}{Theorem}
\newtheorem{assp}{Assumption}
\newtheorem{prop}{Proposition}
\newtheorem{rmk}{Remark}
\theoremstyle{plain}
\definecolor{orange}{RGB}{255,107,0}
\def\black{\color{black}}
\begin{document}
%
\title{Efficient Estimation of Sensor Biases for the 3-Dimensional Asynchronous Multi-Sensor System}
%
%
%

\author{Wenqiang Pu,~\IEEEmembership{Member,~IEEE,} Ya-Feng Liu,~\IEEEmembership{Senior Member,~IEEE,} Zhi-Quan Luo,~\IEEEmembership{Fellow,~IEEE}
\thanks{Part of this work has been presented at the 2018 IEEE GlobalSIP~\cite{Jiang2018}. The work of W. Pu  was supported by the National Natural Science Foundation of China (NSFC) under Grant 62101350. The work of Y.-F. Liu was supported in part by the NSFC under Grant 12022116 and Grant 12288201. The work of Z.-Q. Luo was supported in part by the National Key Research and Development Program of China Project under Grant 2022YFA1003900 and the Guangdong Provincial Key Laboratory of Big Data Computing. (\textit{Corresponding Author: Ya-Feng Liu.})}
\thanks{W. Pu and Z.-Q.~Luo are with Shenzhen Research Institute of Big Data, The Chinese University of Hong Kong, Shenzhen, 518172, China (e-mail: $\{$wenqiangpu,luozq$\}$@cuhk.edu.cn).}
\thanks{Y.-F.~Liu is with the State Key Laboratory
	of Scientific and Engineering Computing, Institute of Computational
	Mathematics and Scientific/Engineering Computing, Academy of
	Mathematics and Systems Science, Chinese Academy of Sciences,
	Beijing 100190, China (e-mail:
	yafliu@lsec.cc.ac.cn).}
}

\maketitle

\begin{abstract}
	
	An important preliminary procedure in multi-sensor data fusion is \textit{sensor registration}, and the key step in this procedure is to estimate sensor biases from their noisy measurements. There are generally two difficulties in this bias estimation problem: one is the unknown target states which serve as the nuisance variables in the estimation problem, and the other is the highly nonlinear coordinate transformation between the local and global coordinate systems of the sensors. In this paper, we focus on the 3-dimensional asynchronous multi-sensor scenario and propose a weighted nonlinear least squares (NLS) formulation by assuming that there is a target moving with a nearly constant velocity. We propose two possible choices of the weighting matrix in the NLS formulation, which correspond to classical and weighted NLS estimation and maximum likelihood (ML) estimation, respectively. To address the intrinsic nonlinearity, we propose a block coordinate descent (BCD) algorithm for solving the formulated problem, which alternately updates different kinds of bias estimates. Specifically, the proposed BCD algorithm involves solving linear LS problems and nonconvex quadratically constrained quadratic program (QCQP) problems with special structures. Instead of adopting the semidefinite relaxation technique, we develop a much more computationally efficient algorithm based on the alternating direction method of multipliers (ADMM) to solve the nonconvex QCQP subproblems. The convergence of the ADMM to the global solution of the QCQP subproblems is established under mild conditions. The effectiveness and efficiency of the proposed BCD algorithm are demonstrated via numerical simulations.
\end{abstract}

\begin{IEEEkeywords}
\textcolor{black}{Alternating direction methods of multipliers, block coordinate decent algorithm, nonlinear least square, sensor registration problem}
\end{IEEEkeywords}

%
\IEEEpeerreviewmaketitle

\section{Introduction}
\label{sec:intro}
In the past decades, multi-sensor data fusion has attracted a lot of research interests in various applications, e.g., tracking system~\cite{bar1990multitarget,yan2020optimal}. Compared with the single sensor system, the performance can be significantly improved by integrating inexpensive stand-alone sensors into an integrated multi-sensor system, which is also a cost-effective way in practice. However, the success of data fusion not only depends on the data fusion algorithms but also relies on an important calibration process called \textit{sensor registration}. The sensor registration process refers to expressing each sensor's local data in a common reference frame, by removing the biases caused by the improper alignment of each sensor \cite{fischer1980registration}. Consequently, the key step in the sensor registration process is to estimate the sensor biases. 

\subsection{Related Works}
An early work \cite{fischer1980registration} for sensor registration dates back to the 1980s, where the authors identified various factors which dominate the alignment errors in the multi-sensor system and established a bias model to compensate the  alignment errors. Under the assumption that there exists a bias-free sensor, the maximum likelihood (ML) estimation approach \cite{fischer1980registration,Dana1990Registration,Fortunati2013On} and the least squares (LS) estimation approach \cite{Fortunati2011Least} were proposed. However, the bias-free sensor assumption is not practical and a more common situation in practice is that all sensors contain biases. For the general situation where all sensors are biased, various approaches have been proposed in the last four decades. From the perspective of parameter estimation, these approaches can be divided into three types: LS estimation \cite{leung1994least,Cowley1993Registration}, ML estimation \cite{zhou1997an,Ristic2003Sensor,Fortunati2013On}, Bayesian estimation \cite{helmick1993removal,Nabaa1999Solution,Okello2004Joint,Lin2006Multisensor,Li2010Joint,Taghavi2016practical,Zia2008EM,crouse2009sensor,mahler2011bayesian,ristic2013calibration}. Note that the sensor registration problem not only contains unknown sensor biases to be estimated, but also the target states, i.e., positions of the target at different time instances. The major differences among the above three types of approaches lie in the treatment for the unknown target states. For the LS approach, the target states are expressed as nonlinear functions of sensor biases in a common coordinate system, where the target states are eliminated and only sensor biases need to be estimated. As for the ML approach, a joint ML function for the sensor biases and the target states is formulated and maximized in an iterative manner (with two steps per iteration), i.e., one step to estimate the sensor biases and one step to estimate the target's states. Instead of jointly estimating the sensor biases and the target states as in the ML approach, the (recursive) Bayesian approach exploits the Markov property of the transition of the target states between different time instances, i.e., the probability distribution of the target state at the next time instance only depends on the target state at the current time instance. The sensor biases and the target states are recursively estimated at every time instance.

Since sensor measurements are in its local polar coordinate system, the nonlinear coordinate transformation between the local polar and global Cartesian coordinates introduces an intrinsic nonlinearity in the sensor registration problem. The approaches mentioned above adopted different kinds of approximation schemes to deal with the nonlinearity, i.e., using linear expansions to approximate some nonlinear functions \cite{leung1994least,Dana1990Registration,Cowley1993Registration,Fortunati2011Least,fischer1980registration,zhou1997an,helmick1993removal,Ristic2003Sensor,Fortunati2013On,Nabaa1999Solution,Okello2004Joint,Lin2006Multisensor,Li2010Joint,Taghavi2016practical} 
which usually leads to an uncontrolled model mismatch or using a large number of discrete samples (particles) to approximate {\black{some nonlinear functions}} which often requires an expensive computational cost when the number of sensors is large \cite{Zia2008EM,mahler2011bayesian,ristic2013calibration}. {\black{In addition to the schemes of dealing with}} the nonlinearity, {\black{many of the above works}} only consider the synchronous case, i.e., all sensors simultaneously measure the position of the target at the same time instance, which may not be always satisfied in many practical scenarios. 

Recent works in~\cite{pu2017two,Pu2018} proposed a semidefinite relaxation (SDR) based block coordinate descent (BCD) optimization approach to address the nonlinearity issue in the asynchronous sensor registration problem in the 2-dimensional scenario. The approach assumes the presence of one reference target moving at a nearly constant velocity (e.g., commercial planes and drones in the real world) and is effective in handling the nonlinearity while also guaranteeing an exact recovery of all sensor biases in the noiseless case. Consequently, it is interesting to explore such BCD algorithm in the 3-dimensional scenario, where more kinds of biases are involved and the estimation problem becomes even more highly nonlinear, i.e., more trigonometric functions with respect to different biases are multiplied together in the coordinate transformation. Furthermore, the BCD algorithm in~\cite{pu2017two,Pu2018} needs to solve convex semidefinite programing (SDP) subproblems at each iteration, whose computational cost is $\mathcal{O}(M^{4.5})$ (where $M$ is the total number of sensors). Such a high computational complexity limits the practical use of the BCD algorithm though it can well deal with the nonlinearity issue. An interesting question is, instead of solving SDPs, whether there is an algorithm which has a lower computational complexity and can solve the original nonconvex problem with a guaranteed convergence. This paper provides a positive answer to the above question.

\subsection{Our Contributions}
In this paper, we consider the 3-dimensional sensor registration problem, where more kinds of biases and noises are involved. Few works tackle the 3-dimensional registration problem except some notable works \cite{helmick1993removal,Ristic2003Sensor,Fortunati2011Least,Fortunati2013On,Jiang2018}. However, all of these works assume that sensors work synchronously. In this paper, we consider a practical scenario where all sensors are biased and work asynchronously. The major contributions are summarized as below:

\noindent
$\bullet$ \textbf{3-dimensional asynchronous sensor registration:} We consider the sensor registration problem in a general scenario, where sensors work asynchronously in 3-dimension and all sensors are biased.  Such a scenario is of great practical interest but has not been well studied yet. Compared with most of existing researches which consider the 2-dimensional scenario, there are more kinds of sensor biases in the 3-dimensional scenario, i.e.,  polar measurement biases and orientation angle biases, which increases the intrinsic nonlinearity in the bias estimation problem.
	
\noindent
$\bullet$ \textbf{A weighted nonlinear least squares formulation:} By exploiting the prior knowledge that the target moves with a nearly constant velocity, we propose a weighted nonlinear LS formulation for the 3-dimensional asynchronous multi-sensor registration problem, which takes different practical uncertainties into consideration, i.e., noise in sensor measurements and dynamic maneuverability in the target motion. The proposed formulation incorporates the covariance information of different practical uncertainties {\black{by appropriately choosing the weight matrix}}, which {\black{enables it to efficiently handle situations with different uncertainties.}}
	
\noindent
$\bullet$ \textbf{A computationally efficient BCD algorithm:} We exploit the special structure of the proposed weighted nonlinear LS formulation and develop a computationally efficient BCD algorithm for solving it. The proposed algorithm alternately updates different kinds of biases by solving linear LS problems and nonconvex quadratically constrained quadratic program (QCQP) problems. Compared with the previous work~\cite{Jiang2018} which utilizes the SDR-based technique for solving the nonconvex QCQP subproblems with a computational complexity of $\mathcal{O}(M^{4.5})$, we develop a low computational cost algorithm based on the alternating direction method of multipliers (ADMM) for solving those QCQP subproblems with a global optimality guarantee (under mild conditions), which only requires a computational complexity of $\mathcal{O}(M^{2})$.

\subsection{Organization and Notations}
The organization of this paper is as follows. In Section \ref{sec:problem_formulation}, we introduce basic models of sensor measurements and the target motion. In Section \ref{sec:ls}, we propose a nonlinear LS formulation for estimating all kinds of biases. To effectively solve the proposed nonlinear LS formulation, we develop a BCD algorithm in Section \ref{sec:bcd}. Numerical simulation results are presented in Section \ref{sec:simu}.

We adopt the following notations in this paper. Normal case letters, lower case letters in bold, and upper case letters in bold represent scalar numbers, vectors, and matrices, respectively. $\mathbf{X}^T$ and $\mathbf{X}^{-1}$ represent the transpose of matrix $\mathbf{X}$ and the inverse of invertible matrix $ \mathbf{X}$, respectively. $[\bm{X}]_{i:j,m:n}$ represents the matrix formed by elements of matrix $\bm{X}$ from rows $i$ to $j$ and columns $m$ to $n$. $\mathbf{x}_{n}$ denotes the $n$-th component of $\mathbf{x}$ and $\|\mathbf{x}\|$ denotes the Euclidean norm of $\mathbf{x}$. $\mathbb{E}_{\mathbf{w}}\{\cdot\}$  is the expectation operator with respect to random variable $\mathbf{w}$. $\mathbb{R}^{M}$  represents the set of $M$-dimensional real vectors. 
Other notions will be explained when they appear for the first time.

\section{Sensor Measurement Model}
\label{sec:problem_formulation}
Consider a 3-dimensional multi-sensor system with $M$ ($M>1$) sensors located at different known positions $\bm{p}_m\in\mathbb{R}^3,\forall\,m$. Assume there is a reference target moving with a nearly constant velocity in the space\footnote{Such a reference target can be selected from the branches of civilian airplanes.} and sensors measure the relative polar coordinates (i.e., range, azimuth, and elevation) at different time instances in an \emph{asynchronous} mode, i.e., polar coordinates of the target at different time instances are measured by different sensors. After a time period in which a set of local measurements are collected by sensors, all sensors' measurements are sent to a fusion center and compactly, all sensors' measurements are sorted and mapped onto a time axis, indexed by time instance $k=1,2,\ldots,K$. Without loss of generality,  we make the following (mild) regularity assumption.
\begin{assp}\label{assp:regular}
Assume that only one sensor observes the target at each time instance and every sensor has at least one measurement in the whole observation interval. In addition, the total number of observations is greater than the number of sensors, i.e., $K\geq M+1$. 
\end{assp}

 At time instance $k$, let the corresponding sensor be indexed by 
$s_k\in \{  1,2,\ldots,M \}$ and $\bm{\xi}_k=(x_k, y_k, z_k)$ be the position of the target in the \emph{global} Cartesian coordinate system. Then this target's position in sensor $s_k$'s local Cartesian coordinate system is denoted as $\bm{\xi}_k^{\prime}\in\mathbb{R}^3$, which can be expressed as a function with respect to $\bm{\xi}_k$ given as:
\begin{equation}\label{eq:pos_local_bias}
\begin{aligned}
\bm{\xi}_k^{\prime} = R^{-1}(\bm{\zeta}_{s_k}+\Delta\bm{\zeta}_{s_k})(\bm{\xi}_k - \bm{p}_{s_k}),
\end{aligned}
\end{equation}
where
\begin{equation}\label{eq: rotation}
\small
\begin{aligned}
\scriptsize
&R(\bm{\zeta})=R(\alpha,\beta,\gamma)= R_x(\alpha)R_y(\beta)R_z(\gamma) \\ =
&
\setlength{\arraycolsep}{1pt}
\left[
\begin{matrix}
1 & 0 & 0 \\
0 & \cos\alpha &-\sin\alpha \\
0 & \sin\alpha & \cos\alpha
\end{matrix}
\right] 
\left[
\begin{matrix}
\cos\beta & 0 & \sin\beta \\
0 & 1 & 0 \\
-\sin\beta & 0 & \cos\beta
\end{matrix}
\right]
\left[
\begin{matrix}
\cos\gamma & -\sin\gamma & 0 \\
\sin\gamma & \cos\gamma & 0 \\
0 & 0 & 1
\end{matrix}
\right].
\end{aligned}
\end{equation}
In the above,  $R(\alpha,\beta,\gamma)$ is a $3\times3$ \textit{rotation matrix}, which represents the coordinate rotation from the local to the global Cartesian coordinate systems,
and parameters $\alpha,\beta,\gamma\in[-\pi,\pi]$ there are rotation angles corresponding to \textit{roll, pitch}, and \textit{yaw}, respectively. Parameter $\bm{\zeta}_{m}=(\alpha_{m},\beta_{m},\gamma_{m})$ in \eqref{eq:pos_local_bias} is the \textit{presumed} rotation angles which is measured by sensor $m$'s orientation system and $\Delta \bm{\zeta}_{m}=(\Delta \alpha_{m},\Delta \beta_{m},\Delta \gamma_{m})$ is the unknown rotation angle biases. Note that $\Delta\bm{\zeta}_{m}$ is due to the imperfection of the orientation system of sensor $m$ and we call them the orientation biases in this paper. Hence, the term $\bm{\zeta}_{s_k}+\Delta\bm{\zeta}_{s_k}$  represents the \textit{true} rotation angles of sensor $s_k$.

In the following, we will discuss another category of sensor biases arising from imperfections in the sensor measurement system. Define the Cartesian-to-polar coordinate transformation function $h(x,y,z)$ as
\begin{equation*}
     h(x,y,z)= \begin{bmatrix} {\rho}\\ {\phi} \\ {\eta} \end{bmatrix}\triangleq\begin{bmatrix}  \sqrt{(x)^2 + (y)^2 + (z)^2} \\ \arctan(x, y)\\ \arctan( \sqrt{(x)^2 + (y)^2},z)\end{bmatrix},
\end{equation*}
where $\rho$, $\phi$, and $\eta$ are polar coordinates corresponding to \textit{range}, \textit{azimuth}, and \textit{elevation}, respectively. Then based on \eqref{eq:pos_local_bias} and the fact $R^{-1}(\cdot)=R^{T}(\cdot)$, the polar coordinates of $\bm{\xi}_k^\prime$ measured by sensor $s_k$ is denoted as $\bm{z}_k=({\rho}_k, {\phi}_k, {\eta}_k),$ which can be expressed as
\begin{equation}\label{eq:mea_model}
\begin{aligned}
\bm{z}_k &=h\left(\bm{\xi}_k^\prime\right) - \Delta\bm{z}_{s_k} + \bm{w}_k\\
&=h\left(R^T\left(\Delta\bm{\zeta}_{s_k}+\bm{\zeta}_{s_k}\right)(\bm{\xi}_k - \bm{p}_{s_k})\right) - \Delta\bm{z}_{s_k} + \bm{w}_k.
\end{aligned}
\end{equation}
In~\eqref{eq:mea_model}, $\Delta\bm{z}_{m}=(\Delta\rho_{m},\Delta\phi_{m},\Delta\eta_{m})$ is the measurement biases of sensor $m;$ $\bm{w}_k\in\mathbb{R}^3$ is the zero-mean Gaussian noise with
$$\bm{w}_k\sim\mathcal{N}(\bm{0},\textrm{diag}([\sigma_\rho^2,\sigma_\phi^2,\sigma_\eta^2])),$$
where $\sigma_\rho^2,\sigma_\phi^2$, and $\sigma_\eta^2$ are variances corresponding to range, azimuth, and elevation, respectively.

In view of the measurement model in~\eqref{eq:mea_model}, there are two types of biases, i.e., orientation biases $\Delta\bm{\zeta}_{s_k}$ and measurement biases $\Delta\bm{z}_{s_k}$. Both of them affect the expression for the global Cartesian coordinates $\bm{\xi}_k$. For each sensor, there are in total six biases, i.e., $\Delta \alpha_{m},\Delta \beta_{m},\Delta \gamma_{m},\Delta\rho_{m},\Delta\phi_{m}$, and $\Delta\eta_{m}$. Each of these biases has a different impact on $\bm{z}_k,$ and the geometric illustration of them is shown in Fig.~\ref{fig:coord}. We note that there is an intrinsic ambiguity among these six kinds of biases, and  this is formally stated in Proposition~\ref{prop:amg}.
\begin{figure}[h]
	\centering 
	\includegraphics[width=0.95\linewidth]{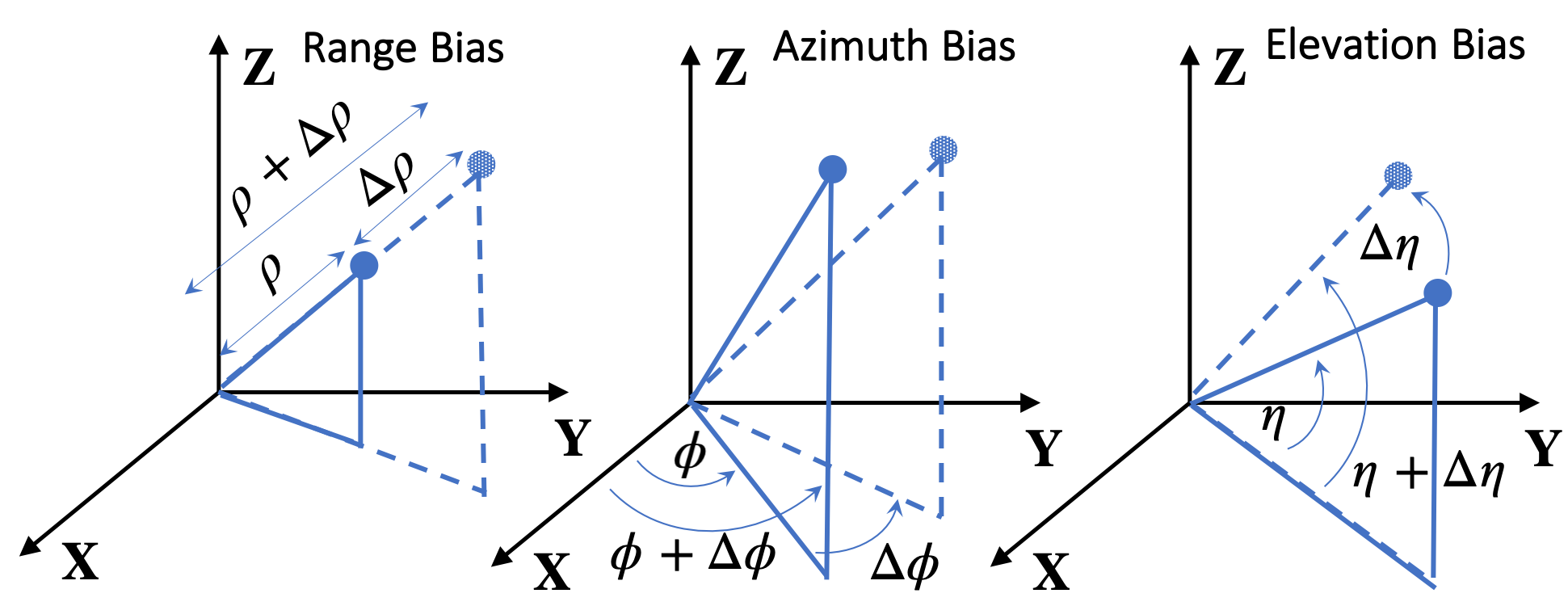}
	\includegraphics[width=0.95\linewidth]{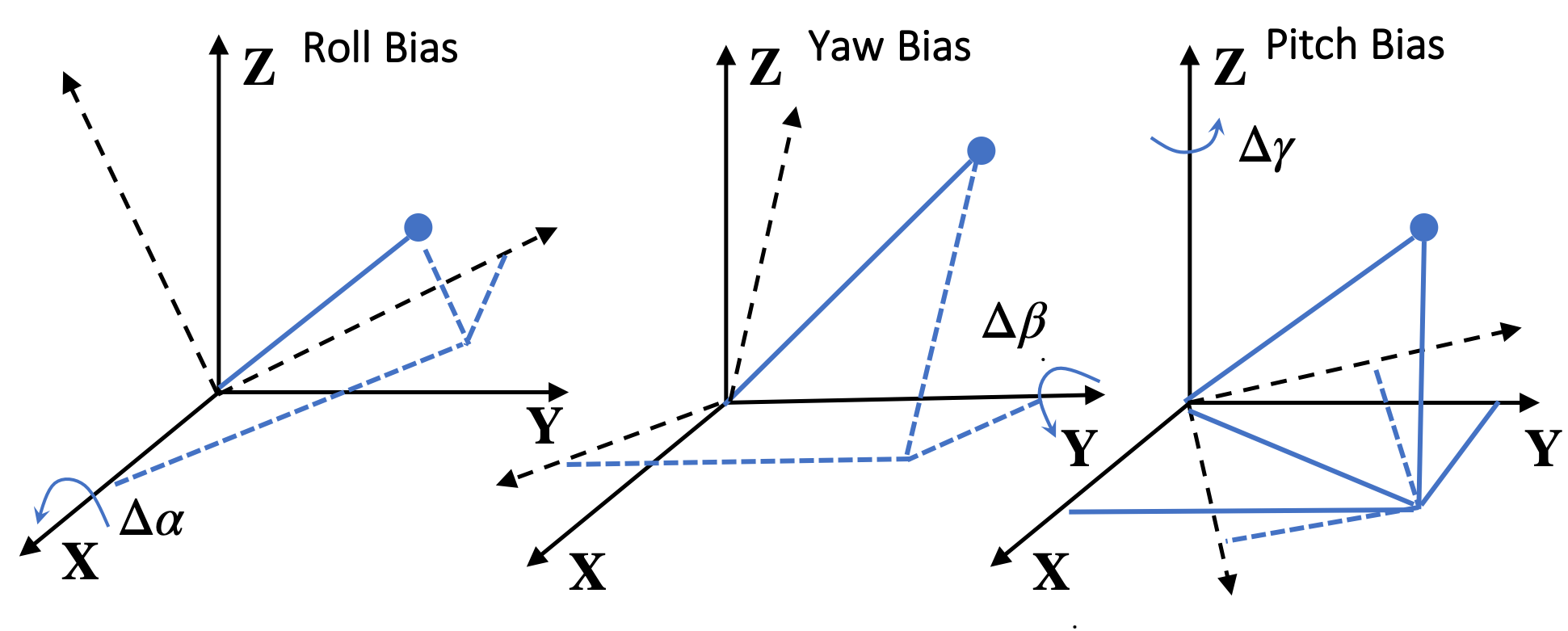}
	\caption{An illustration of different kinds of sensor biases.}
	\label{fig:coord}
\end{figure}
\begin{prop}\label{prop:amg}
	Let $c\in\mathbb{R}$, then any pair $(\Delta\gamma_{s_k},\Delta \phi_{s_k})$ satisfying $\Delta\gamma_{s_k}+\Delta \phi_{s_k}=c$ gives the same $\bm{z}_k$ in \eqref{eq:mea_model}.
\end{prop}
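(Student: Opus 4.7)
The plan is to exhibit an explicit one-parameter family of equivalent bias pairs and thereby show that only the sum $\Delta\gamma_{s_k}+\Delta\phi_{s_k}$ is identifiable from $\bm{z}_k$. Specifically, for every $\delta\in\mathbb{R}$ I will show that replacing $(\Delta\gamma_{s_k},\Delta\phi_{s_k})$ by $(\Delta\gamma_{s_k}+\delta,\Delta\phi_{s_k}-\delta)$ leaves the right-hand side of \eqref{eq:mea_model} unchanged. Since this shift preserves $\Delta\gamma_{s_k}+\Delta\phi_{s_k}$ and its orbit is precisely the level set $\{\Delta\gamma_{s_k}+\Delta\phi_{s_k}=c\}$, the proposition follows.

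First I would decompose $R^{T}(\bm{\zeta}_{s_k}+\Delta\bm{\zeta}_{s_k})=R_z^{T}(\gamma_{s_k}+\Delta\gamma_{s_k})\,R_y^{T}(\beta_{s_k}+\Delta\beta_{s_k})\,R_x^{T}(\alpha_{s_k}+\Delta\alpha_{s_k})$ using the product structure in \eqref{eq:rotation}. Because $R_z$ is a one-parameter subgroup, $R_z^{T}(\theta+\delta)=R_z^{T}(\delta)R_z^{T}(\theta)$, so perturbing $\Delta\gamma_{s_k}$ by $\delta$ is equivalent to left-multiplying the local-coordinate target vector $\bm{\xi}_k^{\prime}=R^{T}(\bm{\zeta}_{s_k}+\Delta\bm{\zeta}_{s_k})(\bm{\xi}_k-\bm{p}_{s_k})$ by $R_z^{T}(\delta)$.

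Second, I would examine how $h$ responds to this yaw rotation. For any $\bm{v}=(x,y,z)$, a direct computation with the explicit form of $R_z$ in \eqref{eq:rotation} gives $\bm{v}^{\prime}=R_z^{T}(\delta)\bm{v}$ with $(x^{\prime})^2+(y^{\prime})^2=x^2+y^2$ and $z^{\prime}=z$. Consequently the range $\sqrt{x^2+y^2+z^2}$ and the elevation $\arctan(\sqrt{x^2+y^2},z)$ are invariant, while the azimuth coordinate $\arctan(x^{\prime},y^{\prime})$ differs from $\arctan(x,y)$ by exactly $-\delta$ (the sign is fixed by the paper's convention for $\arctan(\cdot,\cdot)$ together with the definition of $R_z$). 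Hence
\begin{equation*}
h\!\left(R_z^{T}(\delta)\,\bm{\xi}_k^{\prime}\right)=h(\bm{\xi}_k^{\prime})-\begin{bmatrix} 0 \\ \delta \\ 0 \end{bmatrix}.
\end{equation*}

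Substituting this into \eqref{eq:mea_model}, the shift $\Delta\gamma_{s_k}\mapsto\Delta\gamma_{s_k}+\delta$ subtracts $\delta$ from the azimuth component of $h(\cdot)$, while the simultaneous shift $\Delta\phi_{s_k}\mapsto\Delta\phi_{s_k}-\delta$ adds $\delta$ back through the $-\Delta\bm{z}_{s_k}$ term. The range, elevation, the other bias entries, and the noise term $\bm{w}_k$ are untouched, so $\bm{z}_k$ is identical to the original. The only subtle point is pinning down the sign of the azimuth shift under the paper's $\arctan(x,y)$ convention and the chosen orientation of $R_z(\gamma)$; once this is fixed, every remaining step is mechanical and there is no deeper analytic obstacle.
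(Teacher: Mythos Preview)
Your proposal is correct and essentially matches the paper's proof: both hinge on the fact that $R_z^{T}$ leaves range and elevation invariant while shifting the azimuth additively, so that only the sum $\Delta\gamma_{s_k}+\Delta\phi_{s_k}$ enters $\bm{z}_k$. The paper writes out the explicit spherical-coordinate formula and reads off $\phi_k=\omega-\gamma-\Delta\gamma-\Delta\phi$, whereas you phrase the same computation as invariance under the shift $(\Delta\gamma,\Delta\phi)\mapsto(\Delta\gamma+\delta,\Delta\phi-\delta)$; these are equivalent presentations of the same argument.
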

\begin{proof}
    See Appendix~\ref{app:propamg}.
\end{proof}
{\black 
Proposition~\ref{prop:amg} states that biases $\Delta \gamma$ and $\Delta\phi$ cannot be distinguished since they are intrinsically coupled in the measurement model \eqref{eq:mea_model}. Only their sum affects the measurements and they can be treated as one kind of bias \cite{Fortunati2011Least}. In the rest part of this paper, we fix $\Delta\phi_m=0,\forall\,m$ and only consider estimating $\Delta \gamma_m,\forall\,m$. 

Consequently, the goal of the {\black{3-dimensional}} multi-sensor registration problem is to determine sensor biases 
$$\bm{\theta}_{m}=[\Delta\rho_m,\Delta\eta_m,\Delta\alpha_m,\Delta\beta_m,\Delta\gamma_m],\forall\,m,$$from sensor measurements $\{\bm{z}_k\}^K_{k=1}$. It is worth noting that determining the sensor biases solely from \eqref{eq:mea_model} is a challenging task since the target position $\bm{\xi}_k$ at time instance $k$ is unknown. Any combination of $\bm{\theta}{s_k}$ and $\bm{\xi}_k$ can satisfy \eqref{eq:mea_model}. However, by leveraging the fact that the target moves with nearly constant velocity, the registration problem can be addressed.
}

\section{A Weighted Nonlinear LS Formulation}
\label{sec:ls}
In this section, we first introduce the unbiased spherical-to-Cartesian coordinate transformation to represent local noisy measurements $\{\bm{z}_k\}$ in the common Cartesian coordinate system as well as the nearly-constant-velocity motion model for the target. Then, we propose a weighted nonlinear least squares formulation for estimating sensor biases. 

According to the sensor measurement model in \eqref{eq:mea_model}, the target position $\bm{\xi}_k$ can be expressed by a nonlinear transformation with respect to $\bm{{z}}_k$ and $\bm{\theta}_{s_k}$, given in Proposition \ref{prop:unbias_mea}.

\begin{prop}[Unbiased Coordinate Transformation \cite{Song1998Unbiased}]\label{prop:unbias_mea}
	Given the measurement model in \eqref{eq:mea_model}, and suppose $\bm{\theta}_{s_k}$ take the value of the true bias. Then, we have 
	\begin{equation}\label{eq:unbias}
	R(\bm{\zeta}_{s_k}+\Delta \bm{\zeta}_{s_k}){\bar{h}^{-1}}(\bm{{z}}_k+\Delta \bm{{z}}_{s_k}) + \bm{p}_{s_k}+\bm{\varepsilon}_k=\bm{\xi}_k ,
	\end{equation}
	where $\bm{\varepsilon}_k$ is a zero-mean random noise and 
	$$
	{ \bar{h}^{-1}(\bm{z}_k)} = 
	\left[
	\begin{matrix}
	\lambda_\phi^{-1} \lambda_\eta^{-1}  \rho_k \cos\phi_k \cos\eta_k    \\
	\lambda_\phi^{-1} \lambda_\eta^{-1}  \rho_k \sin\phi_k \cos\eta_k \\
	\lambda_\eta^{-1}  \rho_k  \sin\eta_k 
	\end{matrix}
	\right]
	$$
	is the unbiased polar-to-Cartesian transformation function with $\lambda_\phi= e^{-\sigma_\phi^2/2} $ and $\lambda_\eta = e^{-\sigma_\eta^2/2} $ being the compensation factors. 
\end{prop}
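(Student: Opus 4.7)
The plan is to verify the unbiasedness claim by computing the expectation of $R(\bm{\zeta}_{s_k}+\Delta\bm{\zeta}_{s_k})h^{-1}(\bm{z}_k+\Delta\bm{z}_{s_k}) + \bm{p}_{s_k}$ and showing it equals $\bm{\xi}_k$, so that $\bm{\varepsilon}_k$ defined as the residual is automatically zero-mean.

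First, I would rearrange the measurement model~\eqref{eq:mea_model}. Writing $\bm{z}_k + \Delta\bm{z}_{s_k} = \bm{z}_k^{\mathrm{nl}} + \bm{w}_k$, where $\bm{z}_k^{\mathrm{nl}} = (\rho_k^{\mathrm{nl}},\phi_k^{\mathrm{nl}},\eta_k^{\mathrm{nl}}) := h\!\left(R^T(\bm{\zeta}_{s_k}+\Delta\bm{\zeta}_{s_k})(\bm{\xi}_k-\bm{p}_{s_k})\right)$ denotes the noise-free spherical coordinates in sensor $s_k$'s local frame. Because $R$ is orthogonal, the proposition reduces to showing that $\mathbb{E}_{\bm{w}_k}\{h^{-1}(\bm{z}_k^{\mathrm{nl}}+\bm{w}_k)\}$ equals the true local Cartesian vector $(\rho_k^{\mathrm{nl}}\cos\phi_k^{\mathrm{nl}}\cos\eta_k^{\mathrm{nl}},\,\rho_k^{\mathrm{nl}}\sin\phi_k^{\mathrm{nl}}\cos\eta_k^{\mathrm{nl}},\,\rho_k^{\mathrm{nl}}\sin\eta_k^{\mathrm{nl}})^T$ with the stated compensation factors $\lambda_\phi$ and $\lambda_\eta$.

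The core calculation uses the characteristic function of a Gaussian: for $w\sim\mathcal{N}(0,\sigma^2)$, one has $\mathbb{E}\{\cos w\}=e^{-\sigma^2/2}$ and $\mathbb{E}\{\sin w\}=0$. Applying the angle-addition formulas to $\cos(\phi_k^{\mathrm{nl}}+w_\phi)$ and $\sin(\phi_k^{\mathrm{nl}}+w_\phi)$ (and likewise for $\eta$), and invoking the independence of the three components of $\bm{w}_k$, I would get, e.g.,
\begin{equation*}
\mathbb{E}\{(\rho_k^{\mathrm{nl}}+w_\rho)\cos(\phi_k^{\mathrm{nl}}+w_\phi)\cos(\eta_k^{\mathrm{nl}}+w_\eta)\}=\lambda_\phi\lambda_\eta\,\rho_k^{\mathrm{nl}}\cos\phi_k^{\mathrm{nl}}\cos\eta_k^{\mathrm{nl}},
\end{equation*}
together with the analogous identities for the $y$- and $z$-components (the $z$-component requires only the $\lambda_\eta$ correction since $\phi$ does not appear). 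Multiplying each component by the inverse of its bias factor exactly yields the closed-form expression written for $h^{-1}(\cdot)$ in the statement, confirming $\mathbb{E}\{h^{-1}(\bm{z}_k+\Delta\bm{z}_{s_k})\}=R^T(\bm{\zeta}_{s_k}+\Delta\bm{\zeta}_{s_k})(\bm{\xi}_k-\bm{p}_{s_k})$.

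Finally, I would left-multiply by $R(\bm{\zeta}_{s_k}+\Delta\bm{\zeta}_{s_k})$, add $\bm{p}_{s_k}$, and define $\bm{\varepsilon}_k := \bm{\xi}_k - R(\bm{\zeta}_{s_k}+\Delta\bm{\zeta}_{s_k})h^{-1}(\bm{z}_k+\Delta\bm{z}_{s_k}) - \bm{p}_{s_k}$, which by the above computation has $\mathbb{E}\{\bm{\varepsilon}_k\}=\bm{0}$, as required. The only mildly delicate point is correctly handling the mixed trigonometric expectations and the factorization coming from independence of $w_\rho$, $w_\phi$, $w_\eta$; no further inequalities or optimization arguments are needed, and the rest is bookkeeping with the explicit form of $R$ and $h^{-1}$.
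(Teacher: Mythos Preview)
Your argument is correct and is precisely the standard derivation of the unbiased converted measurement: reduce to the local frame via the orthogonality of $R$, factor the expectation using independence of $(w_\rho,w_\phi,w_\eta)$, and invoke $\mathbb{E}\{\cos w\}=e^{-\sigma^2/2}$, $\mathbb{E}\{\sin w\}=0$ for Gaussian $w$ to identify the bias factors $\lambda_\phi,\lambda_\eta$. The paper does not supply its own proof of this proposition; it simply cites \cite{Song1998Unbiased}, whose derivation your proposal faithfully reproduces.
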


Note that $\bm{\varepsilon}_k$ in Proposition~\ref{prop:unbias_mea} compensates the effects of the Gaussian noise $\bm{w}_k$ (in $\bm{z}_k$) after the nonlinear mapping $R(\bm{\zeta}_{s_k}+\Delta \bm{\zeta}_{s_k})\bar{h}^{-1}(\cdot)$ and $\bm{\varepsilon}_k$ can be viewed as a function of $\bm{w}_k$. The compensation factors ($\lambda_\phi$ and $\lambda_\eta$) in $\bar{h}^{-1}(\cdot)$ ensures  $\mathbb{E}_{\bm{w}_k}[ \bm{\varepsilon}_k]=\bm{0}$ but  $\bm{\varepsilon}_k$ is no longer a Gaussian noise due to the nonlinear mapping.

Next, assume that the reference target moves with a nearly constant velocity \cite{covariance}, i.e., 
\begin{equation}\label{eq:motion}
\begin{aligned}
\bm{\xi}_{k+1} &= \bm{\xi}_k + T_k \dot{\bm{\xi}}_k+ \bm{n}_k,\\
\dot{\bm{\xi}}_{k+1} &= \dot{\bm{\xi}}_k+ \dot{\bm{n}}_k,  
\end{aligned}
\end{equation}
where $\bm{\xi}_k=[x_k,y_k,z_k]^T\in\mathbb{R}^3$ and $\dot{\bm{\xi}}_k=[\dot{x}_k,\dot{y}_k,\dot{z}_k]^T\in\mathbb{R}^3$ are the position and velocity of the target at time instance $k$, respectively, $T_k\geq0$ is the interval time between time instances $k$ and $k+1$, and $\bm{n}_k\in\mathbb{R}^3$ and $\dot{\bm{n}}_k\in\mathbb{R}^3$ are the motion process noises for position and velocity at time instance $k$, respectively. The process noises $\bm{n}_k$ and $\dot{\bm{n}}_k$ obey the Gaussian distribution
\begin{equation}
\bm{n}_k\sim\mathcal{N}(\bm{0},qT_k^3/3\bm{I}_3),\ \dot{\bm{n}}_k\sim\mathcal{N}(\bm{0},qT_k\bm{I}_3).
\end{equation}
In the above, $q$ is the value of the noise power spectral density of the target motion and $\mathcal{N}(\bm{\mu},\bm{\Sigma})$ denotes the Gaussian distribution with mean $\bm{\mu}$ and covariance $\bm{\Sigma}$. 

Combining Proposition \ref{prop:unbias_mea} with the target motion model in \eqref{eq:motion}, {\black{we can immediately connect the}} measurements at time instances $k$ and $k+1$ with each other, given in Proposition \ref{prop:motion}.
{\black 
\begin{prop} \label{prop:motion}
	Given the measurement model in \eqref{eq:mea_model} and the target motion model in \eqref{eq:motion}, and consider time instances $k$ and $k+1$. Then, under Assumption~\ref{assp:regular}, we have 
	\begin{equation*}
	\begin{aligned}
	&g_{k}(  \boldsymbol{\theta}_{s_{k}})+\bm{\varepsilon}_{k}+T_{k}\dot{\bm{\xi}}_{k}+{\bm{n}}_k
	= g_{k+1}(  \boldsymbol{\theta}_{s_{k+1}})+\bm{\varepsilon}_{k+1},
	\end{aligned}
	\end{equation*}
	where $g_{k}( \bm{\theta}_{s_k}) = 
R(\bm{\zeta}_{s_{k}}+ { \Delta\bm{\zeta}_{s_{k}} }){ \bar{h}^{-1}}(\bm{z_}{k}+\Delta\bm{z}_{s_{k}}) + \bm{ p}_{s_k}$.
\end{prop}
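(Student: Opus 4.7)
The plan is a direct substitution argument that combines Proposition~\ref{prop:unbias_mea} with the position equation of the nearly-constant-velocity model~\eqref{eq:motion}. First I would apply Proposition~\ref{prop:unbias_mea} at time instance $k$; since the function $g_k$ as defined bundles the rotation $R(\bm{\zeta}_{s_k}+\Delta\bm{\zeta}_{s_k})$, the unbiased transform $h^{-1}(\bm{z}_k+\Delta\bm{z}_{s_k})$, and the translation $\bm{p}_{s_k}$, the proposition collapses to the compact identity $\bm{\xi}_k = g_k(\bm{\theta}_{s_k}) + \bm{\varepsilon}_k$. Applying the same proposition at time instance $k+1$ yields $\bm{\xi}_{k+1} = g_{k+1}(\bm{\theta}_{s_{k+1}}) + \bm{\varepsilon}_{k+1}$.

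Second, I would substitute both identities into the position line of~\eqref{eq:motion}, namely $\bm{\xi}_{k+1} = \bm{\xi}_k + T_k\dot{\bm{\xi}}_k + \bm{n}_k$. A direct rearrangement of the resulting equality produces exactly the statement of the proposition. The velocity equation in~\eqref{eq:motion} is not needed at this step — it becomes relevant only when the identity is chained across more than two consecutive time instances to propagate $\dot{\bm{\xi}}_k$.

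There is essentially no genuine obstacle here: Proposition~\ref{prop:motion} is a bookkeeping identity whose only nontrivial ingredient is the unbiased spherical-to-Cartesian transform supplied by Proposition~\ref{prop:unbias_mea}. The one careful point is the algebraic matching of the translation term $\bm{p}_{s_{k+1}}$ with the definition of $g_{k+1}$, which is a trivial check. The real payoff of the identity is conceptual rather than technical: it eliminates the latent target position $\bm{\xi}_k$ in favor of the unknown biases plus a composite noise/process term $\bm{\varepsilon}_k - \bm{\varepsilon}_{k+1} + \bm{n}_k$ whose second-order statistics are computable from the Gaussian assumptions on $\bm{w}_k$ and $\bm{n}_k$, which is precisely the quantity that will drive the choice of weighting matrix in the subsequent weighted nonlinear least squares formulation.
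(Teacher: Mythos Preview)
Your approach is correct and is precisely the paper's implicit argument: the paper does not give a separate proof but simply states that Proposition~\ref{prop:motion} follows ``immediately'' by combining Proposition~\ref{prop:unbias_mea} with the position line of~\eqref{eq:motion}, which is exactly the substitution you describe. One caveat on the ``trivial check'' you flag: the stray $\bm{p}_{s_{k+1}}$ on the right-hand side of the displayed identity appears to be a typographical slip in the statement, since $g_{k+1}$ already absorbs $\bm{p}_{s_{k+1}}$ by definition and the downstream equation~\eqref{eq:bias_mea_model} (which the paper derives from this proposition) is consistent only without that extra term.
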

}

By Proposition \ref{prop:motion}, we can establish the following nonlinear equations for the sensor biases estimation problem: 
\begin{equation}\label{eq:bias_mea_model}
\begin{aligned}
\bm{\epsilon}_k&=g_{k+1}(  \boldsymbol{\theta}_{s_{k+1}}) - 
g_{k}( \boldsymbol{\theta}_{s_k}) - T_k\dot{\bm{\xi}}_k,\ \forall\,k,\\
\dot{\bm{n}}_k &= \dot{\bm{\xi}}_{k+1} - \dot{\bm{\xi}}_k,\  \forall\,k,
\end{aligned}
\end{equation}
where $\bm{\epsilon}_k=\bm{\varepsilon}_{k+1}-\bm{\varepsilon}_{k}-\bm{n}_k$ is also a zero-mean random noise. 

Equations in~\eqref{eq:bias_mea_model} is a combination of sensor measurement model~\eqref{eq:mea_model} and target motion model~\eqref{eq:motion}. They can be regarded as the constructed measurement models for parameters $\{ \bm{\theta}_m\}$ and $\{ \dot{\bm{\xi}}_k\}$, where 
$\{\bm{\epsilon}_k\}$ and $\{ \dot{\bm{n}}_k \}$ are the zero-mean noise.

Based on \eqref{eq:bias_mea_model}, we propose to minimize the squares of $\bm{\epsilon}_k$ and $\dot{\bm{n}}_k$, which results  in the following nonlinear least squares formulation for estimating sensor biases:
\begin{equation}\label{eq:opt_nls}
\min \limits_{\{ \bm{\theta}_m \},\{  \dot{\bm{\xi}}_k \}}
\sum_{k=1}^{K-1}
\left\| \begin{bmatrix}g_{k+1}( \boldsymbol{\theta}_{s_{k+1}} )\\ \dot{\bm{\xi}}_{k+1} \end{bmatrix}
-\begin{bmatrix} g_{k}( \boldsymbol{\theta}_{s_k}) \\ \dot{\bm{\xi}}_{k} \end{bmatrix} - 
\begin{bmatrix} T_k\dot{\bm{\xi}}_k\\ \bm{0} \end{bmatrix} \right\|_{\bm{Q}_k}^2,
\end{equation}
where $\| \cdot \|_{\bm{Q}}$ denotes the weighted $\ell_2$ norm associated with the positive definite symmetric matrix $\bm{Q}$, i.e., $\| \bm{x}\|_{\bm{Q}}^2=\bm{x}^T\bm{Q}\bm{x}$. Notice that the LS form in~\eqref{eq:opt_nls} is derived from the nearly constant velocity model~\eqref{eq:motion}, where the term $ g_{k+1}(  \boldsymbol{\theta}_{s_{k+1}})-g_{k}(  \boldsymbol{\theta}_{s_{k}})-T_k\dot{\bm{\xi}}_{k}$ represents the mismatch of the position of the target in the global coordinate system and the other term $\dot{\bm{\xi}}_{k+1} - \dot{\bm{\xi}}_{k}$ corresponds to the mismatch of the nearly constant velocity. The current LS formulation cannot be directly applied to handle reference targets with strong maneuverability, such as those performing a constant turn motion. However, the fundamental concept of linking measurements at different time instances using the maneuver motion model is still relevant and can be further explored for more practical scenarios. 

The choices of the weight matrix $\bm{Q}_k$ in \eqref{eq:opt_nls} could be either the identity matrix or the (approximate) covariance matrix of $\bm{\epsilon}_k$ and $\dot{\bm{n}}_k$. More specifically, with $\bm{Q}_k=\bm{I},\ \forall\,k$, problem \eqref{eq:opt_nls} is a classical nonlinear LS problem based on the zero-mean property of noises $\bm{\epsilon}_k$ and $\dot{\bm{n}}_k$. Instead, let $\mathbf{R}_k$ be the \emph{approximated} covariance matrix with respect to $\bm{\epsilon}_k$ given in \cite{Song1998Unbiased}, then problem \eqref{eq:opt_nls} with 

\begin{equation}\label{eq:Qk}
\bm{Q}_k=\begin{bmatrix}
\bm{R}_{k+1}+\bm{R}_k+qT_k^3/3\bm{I} &qT_k^2/2\bm{I}\\
qT_k^2/2\bm{I} & qT_k\bm{I}
\end{bmatrix}^{-1}
\end{equation}
becomes a weighted NLS estimation. It is worth mentioning that two comprehensive works~\cite{duan2004comments,bordonaro2014decorrelated} have provided different formulations for evaluating $\bar{h}(\cdot)$ and $\bm{R}_k$ to resolve the incompatibility issue in~\cite{Song1998Unbiased}. However, as the focus of this paper is to utilize optimization techniques to handle the nonlinearity in the bias estimation problem, we have chosen to use the basic version of the unbiased coordinate transformation [24]. Nevertheless, the optimization techniques developed in this paper are also applicable to other unbiased coordinate transformations.


{\black{The mathematical difficulty of solving problem \eqref{eq:opt_nls} (no matter how to choose $\bm{Q}_k$) lies in its nonlinearity, as $g_k(\cdot)$ contains multiple products of trigonometric functions. In the next section, we will exploit the special structures of problem \eqref{eq:opt_nls} and develop a BCD algorithm for solving it (with any positive definite matrix $\bm{Q}_k$).}}

\section{A Block Coordinate Descent Algorithm}
\label{sec:bcd}
The proposed BCD algorithm iteratively minimizes the objective function with respect to one type of the biases with the others being fixed. The intuition of such decomposition is that different types of biases affect the measurements differently. By dividing the optimization variables into multiple blocks according to the type of bias, we can leverage the distinct mathematical effects of each type of biases on problem~\eqref{eq:opt_nls}. In particular, each subproblem with respect to one type of biases has a compact form, which can be solved uniquely under mild conditions. This enables us to develop a computationally efficient algorithm.

Let $\bm{v}=[\dot{\bm{\xi}}_1,\dot{\bm{\xi}}_2,\ldots,\dot{\bm{\xi}}_{K}]^T$, $\Delta\bm{\rho}=[\Delta\rho_1,\Delta\rho_2,\ldots,\Delta\rho_M]^T,$
and similar to the other kinds of biases. The objective function in \eqref{eq:opt_nls} can be expressed as $f(\bm{v},\Delta \bm{\rho}, \Delta \bm{\eta}, \Delta \bm{\alpha}, \Delta \bm{\beta}, \Delta \bm{\gamma})$, and the proposed BCD algorithm at iteration $t$ is 
\begin{subequations}\label{eq:bcd}
	\begin{align}
	&\bm{v}^{t+1} = \arg \min \limits_{\bm{ v}} f( \bm{v},\Delta \bm{\rho}^t, \ldots, \Delta \bm{\gamma}^{t}), \label{eq:sub_xi} \\
	&{\Delta \bm{\rho}}^{t+1} = \arg \min \limits_{\Delta\bm{ \rho}} f(\bm{v}^{t+1},\Delta \bm{\rho},  \ldots, \Delta \bm{\gamma}^{t}),\label{eq:sub_rho} \\
	&{\Delta \bm{\eta}}^{t+1} = \arg \min \limits_{\Delta\bm{ \eta}} f(\bm{v}^{t+1},\ldots, \Delta \bm{\eta}, \ldots, \Delta \bm{\gamma}^{t}), \label{eq:sub_eta} \\
	&{\Delta \bm{\alpha}}^{t+1} = \arg \min \limits_{\Delta\bm{ \alpha}} f(\bm{v}^{t+1},\ldots,\Delta \bm{\alpha}, \ldots, \Delta \bm{\gamma}^{t}),\label{eq:sub_alp} \\
	&{\Delta \bm{\beta}}^{t+1} = \arg \min \limits_{\Delta\bm{ \beta}} f(\bm{v}^{t+1},\ldots, \Delta \bm{\beta}, \Delta \bm{\gamma}^{t} ),\label{eq:sub_beta}\\ 
	&{\Delta \bm{\gamma}}^{t+1} = \arg \min \limits_{\Delta\bm{ \gamma}} f(\bm{v}^{t+1}, \ldots,\Delta \bm{\beta}^{t+1}, \Delta \bm{\gamma} ).\label{eq:sub_gamma} 
	\end{align}
\end{subequations}
In the next, we derive the solutions to the above subproblems \eqref{eq:sub_xi}--\eqref{eq:sub_gamma}. For simplicity, we will omit the iteration index $t$ in the coming derivation. 

\subsection{LS Solution for Subproblems \eqref{eq:sub_xi} and \eqref{eq:sub_rho}}
By \eqref{eq:opt_nls}, we  have the following convex quadratic reformulation for subproblem \eqref{eq:sub_xi}:
\begin{equation}\label{eq:opt_v}
 \min \limits_{\bm{v}}\ \| \bm{H}_v\bm{v} + \bm{c}_{v}\|_{\bm{Q}}^2, 
\end{equation}
where $\bm{H}_{v} \in \mathbb{R}^{6(K-1)\times3K}$ is the coefficient matrix, $\bm{ c}_{v} \in \mathbb{R}^{6(K-1)}$ is the constant vector, and $\bm{Q}\in \mathbb{S}_+^{6(K-1)}$ is a positive definite matrix. Detailed expressions of them are given in Appendix \ref{app:deriv}-A of the Supplementary Material. {Setting the gradient to be zero, the} closed-form solution to subproblem \eqref{eq:sub_xi} is
\begin{equation}\label{eq:opt_solu_v}
\bm{ v}^{*}
= -(\bm{H}_v^T \bm{Q} \bm{H}_v)^{{\dagger}} \bm{H}_v^T\bm{Q}_{{v}}\bm{ c}_v,
\end{equation}
where $\dagger$ denotes the pseudo-inverse operator.

The subproblem \eqref{eq:sub_rho} for updating $\Delta\bm{\rho}$ also has a closed-form solution since its objective is also convex quadratic with respect to $\Delta\bm{\rho}$. Subproblem \eqref{eq:sub_rho} can be reformulated as
\begin{equation}\label{eq:opt_rho}
\min \limits_{\Delta\bm{\rho}}\  \| \bm{H}_{\rho} \Delta\bm{\rho}
+ \bm{ c}_{{\rho}}\|_{\bm{Q}}^2,
\end{equation}
where $\bm{H}_{\rho} \in \mathbb{R}^{6(K-1)\times M}$ is the coefficient matrix, and $\bm{ c}_{\rho} \in \mathbb{R}^{6(K-1)}$ is the constant vector. Detailed expressions of them are given in  Appendix \ref{app:deriv}-B of the Supplementary Material. Similar to subproblem \eqref{eq:sub_xi},  the closed-form solution of \eqref{eq:sub_rho} is
\begin{equation}\label{eq:opt_solu_rho}
\Delta \bm{\rho}^{*}
= -(\bm{H}_\rho^T \bm{Q}\bm{H}_\rho)^{{ {\dagger}}} \bm{H}_\rho^T\bm{Q}\bm{ c}_\rho. 
\end{equation}

\subsection{Solution for Subproblems \eqref{eq:sub_eta}--\eqref{eq:sub_gamma}}
Since $g_k(\cdot)$ in \eqref{eq:opt_nls} is linear with respect to the trigonometric functions of the angle biases, all the four subproblems \eqref{eq:sub_eta}-\eqref{eq:sub_gamma} can be reformulated into a same QCQP form. 
For better presentation, we use $\Delta \bm{\vartheta}\in \Omega=\{ \Delta\bm{\eta},\Delta\bm{\alpha},\Delta\bm{\beta},\Delta\bm{\gamma} \}$ to denote one kind of these angle biases and the corresponding subproblem can be reformulated as 
\begin{equation}\label{eq:qcqp}
\begin{aligned}
\min_{ \bm{x} \in \mathbb{R}^{2M} }\  \|\bm{H}_{\vartheta} \bm{x} + \bm{ c}_{\vartheta}\|_{\bm{Q}}^2,\quad
\textrm{s.t.}\ \bm{x} \in \mathcal{C},
\end{aligned} 
\end{equation}
where 
\begin{equation*}
\bm{x} = [\cos\Delta\vartheta_1,\sin\Delta \vartheta_1,\ldots,\cos\Delta \vartheta_M ,\sin\Delta \vartheta_M]^T
\end{equation*}
and set $\mathcal{C}$ is defined as 
$$\mathcal{C}\triangleq\{  \bm{x}\in\mathbb{R}^{2M}\mid x_{2i}^2+x_{2i-1}^2 = 1,~i=1,\dots,M\}.$$
In \eqref{eq:qcqp}, $\bm{H}_{\vartheta}\in \mathbb{R}^{6(K-1) \times 2M}$ is the coefficient matrix and $\bm{ c}_{\vartheta} \in \mathbb{R}^{6(K-1)}$ is the constant vector. Detailed expressions of them are given in Appendix \ref{app:deriv}-C of the Supplementary Material.

Problem \eqref{eq:qcqp} is a non-convex problem due to its non-convex equality constraints. In general, such class of problem is known to be NP-hard \cite{luo20104} and  the SDR based approach \cite{luo2010semidefinite,lu2019tightness} was used to solve it. In particular, in \cite{Pu2018}, the authors utilized the SDR technique to solve a specific case of problem \eqref{eq:qcqp} where it can be reformulated as a complex QCQP, and  proved that the SDR is tight, i.e., the global minimizer of the QCQP problem can be obtained by solving the SDR under a mild condition. This motivates us to ask whether problem~\eqref{eq:qcqp} can also be globally solved and whether there are some efficient algorithms better than solving the SDR. 

Note that the non-convex constraints in problem~\eqref{eq:qcqp} enjoy an ``easy-projection'' property. In particular, let $\textrm{Proj}_\mathcal{C}(\bm{x})$ denote the projection of any non-zero point $\bm{x}\in\mathbb{R}^{2M}$ onto $\mathcal{C}$. Then $\textrm{Proj}_\mathcal{C}(\bm{x})$ admits the following closed-form solution: 
\begin{equation*}
\textrm{Proj}_\mathcal{C}(\bm{x})=\arg\min_{ \bm{z} \in \mathcal{C}}\| \bm{x}-\bm{z}\|^2=\bm{D}_{\bm{x}}^{-1}\bm{x},
\end{equation*}
where $\bm{D}_{\bm{x}}$ is a diagonal matrix defined as 
$$\bm{D}_{\bm{x}}=\textrm{Diag}\left(\left[\sqrt{\bm{x}_1^2+\bm{x}_2^2},\ldots,\sqrt{\bm{x}_{2M-1}^2+\bm{x}_{2M}^2}\right]\right)\otimes\bm{I}_2,$$
and $\otimes$ denotes the Kronecker product. Such ``easy-projection'' property together with recent results of the ADMM for nonconvex problems~\cite{wang2019global} provides us a potential way of efficiently solving problem \eqref{eq:qcqp} with a guaranteed convergence. The choice of the ADMM algorithm over other potential methods such as the gradient projection (GP) method is discussed in Remark~\ref{rmk:admm}. In the next, we will apply the ADMM for solving it and also show that the ADMM is able to globally solve problem \eqref{eq:qcqp} under mild conditions.  

\begin{rmk}[ADMM versus GP]\label{rmk:admm}
Note that the GP method is a good option for solving problem~\eqref{eq:qcqp}, whose convergence to a stationary point can be guaranteed under the Kurdyka-Lojasiewicz framework~\cite{attouch2013convergence}. In this paper, we choose the ADMM rather than the GP method for solving problem~\eqref{eq:qcqp} is because the ADMM shows a more stable numerical behavior than the GP method and in particular the ADMM is more stable to the choice of the step size than the GP method. This is particularly important for solving our interested problem \eqref{eq:opt_nls} as we need to solve subproblems in the form of~\eqref{eq:qcqp} many times in the proposed BCD algorithm. 
\end{rmk}

Based on such ``easy-projection'' property, we introduce an auxiliary variable $\bm{x}=\bm{z}$ to split problem \eqref{eq:qcqp}, which makes all subproblems in the ADMM iterations {\black{admit closed-form solutions}}. In particular, subproblem \eqref{eq:qcqp} is reformulated as
\begin{equation}\label{eq:qcqp_ref}
\begin{aligned}
\min_{ \bm{z},\bm{x} } \ \|\bm{H}_{\vartheta} \bm{x} + \bm{ c}_{\vartheta}\|^2+\iota_{\mathcal{C}}(\bm{z}),\quad
\textrm{s.t.}\ \bm{x}=\bm{z},
\end{aligned} 
\end{equation}
where $\iota_{\mathcal{C}}(\bm{z})$ is the indicator function for set $\mathcal{C}$, given as 
\begin{equation}\label{eq:indfunc}
\iota_{\mathcal{C}}(\bm{z})=\left\lbrace
\begin{aligned}
0,\ &\textrm{if }\bm{z}\in\mathcal{C},\\
\infty, \ &\textrm{otherwise}.
\end{aligned}
\right.
\end{equation}
The augmented Lagrangian function for problem \eqref{eq:qcqp_ref} is 
\begin{equation}\label{eq:Lrho}
L_\rho(\bm{x},\bm{z},\bm{\lambda})=\| \bm{H}_{\vartheta}\bm{x}+\bm{c}_{\vartheta}\|^2+\bm{\lambda}^T(\bm{x}-\bm{z})+\frac{\rho}{2}\|\bm{x}-\bm{z} \|^2+\iota_{\mathcal{C}}(\bm{z}),
\end{equation}
where $\rho>0$ is the penalty parameter and $\bm{\lambda}\in\mathbb{R}^{2M}$ is the Lagrange multiplier for the equality constraint $\bm{x}=\bm{z}$. Then, the ADMM updates at  iteration $\ell$ are
\begin{subequations}\label{eq:admm}
\begin{align}
\bm{x}^{\ell+1}&=\min_{ \bm{x}}L_\rho(\bm{x},\bm{z}^\ell,\bm{\lambda}^\ell),\label{eq:admm_subx}\\
\bm{z}^{\ell+1}&=\min_{ \bm{z}}L_\rho(\bm{x}^{\ell+1},\bm{z},\bm{\lambda}^\ell),\label{eq:admm_subz}\\
\bm{\lambda}^{\ell+1}&=\bm{\lambda}^t+\rho(\bm{x}^{\ell+1}-\bm{z}^{\ell+1}).
\end{align}
\end{subequations}
Note that subproblem \eqref{eq:admm_subx} is an unconstrained strongly convex quadratic problem and subproblem \eqref{eq:admm_subz} is actually a projection problem, both of which can be solved in closed forms. Details are given in Algorithm \ref{alg:admm}.

\begin{algorithm}[h]
	\caption{ADMM for Solving Problem \eqref{eq:qcqp_ref}}  
	\begin{algorithmic}[1]  
		\STATE Initialize $\bm{x}^0,\bm{z}^0,\bm{\lambda}^0$ and $\rho>0$
		\FOR{$\ell=0,1,2,\ldots,$}
		\STATE $\bm{x}^{\ell+1}=-(\bm{H}_{\vartheta}^T\bm{Q}\bm{H}_{\vartheta}+\frac{\rho}{2}\bm{I})^{-1}(2\bm{H}_{\vartheta}^T\bm{Q}\bm{c}_{\vartheta}+\bm{\lambda}^\ell-\rho\bm{z}^\ell)/2$;
		\STATE $\bm{z}^{\ell+1}=\textrm{Proj}_{\mathcal{C}}(\frac{\bm{\lambda}^\ell+\rho\bm{x}^{\ell+1}}{\rho})$;
		\STATE $\bm{\lambda}^{\ell+1}=\bm{\lambda}^\ell+\rho(\bm{x}^{\ell+1}-\bm{z}^{\ell+1})$;
		\ENDFOR
		\STATE Output $\bm{x}^*,\bm{z}^*$
	\end{algorithmic}  
	\label{alg:admm}
\end{algorithm} 

Next, we discuss the convergence property of Algorithm \ref{alg:admm}. Before doing this, we first study the structures of coefficients $\bm{H}_\vartheta$ and $\bm{c}_\vartheta$ in problem \eqref{eq:qcqp}. Note that $\bm{H}_\vartheta$ and $\bm{c}_\vartheta$ can be viewed as random matrices/vectors and the randomness is caused by those zero-mean Gaussian noises, i.e., $\bm{w}_k$ in~\eqref{eq:mea_model} and $\bm{n}_k,\,\dot{\bm{n}}_k$ in~\eqref{eq:motion}. To understand the effects of these noises, let us suppose that there is no noise, i.e., $\bm{w}_k=\bm{0},\bm{n}_k=\dot{\bm{n}}_k=\bm{0}$, for all $ k$, and all the other biases $\Omega\setminus\Delta\bm{\vartheta}$ and velocity $\bm{v}$ take the true values. Then by the same derivations in Appendix \ref{app:deriv} of the Supplementary Material (with compensation factors $\lambda_\eta=\lambda_\phi=1$), we attain new $\bar{\bm{H}}_\vartheta$ and $\bar{\bm{c}}_\vartheta$. Intuitively, $\bar{\bm{H}}_\vartheta$ and $\bar{\bm{c}}_\vartheta$ represent the ``true'' parts of $\bm{H}_\vartheta$ and $\bm{c}_\vartheta$ by removing measurement noise $\bm{w}_k$, motion process noises $\bm{n}_k$ and $\dot{\bm{n}}_k$, and errors of other biases and velocity. Consequentially, the ``errors'' in $\bm{H}_\vartheta$ and $\bm{c}_\vartheta$ can be defined as 
\begin{equation}\label{eq:decomp}
\begin{aligned}
\Delta\bm{H}_\vartheta=\bm{H}_\vartheta-\bar{\bm{H}}_\vartheta,\ 
\Delta\bm{c}_\vartheta=\bm{c}_\vartheta-\bar{\bm{c}}_\vartheta.
\end{aligned}
\end{equation}
Technically, decomposing $\bm{H}_\vartheta$ and $\bm{c}_\vartheta$ in~\eqref{eq:decomp} provides a natural way of characterizing the convergence behavior of Algorithm \ref{alg:admm}, given in Theorem \ref{thm:admm}.

\begin{thm}\label{thm:admm}
	Consider applying Algorithm~\ref{alg:admm} to solve problem~\eqref{eq:qcqp_ref} with a sufficiently large penalty parameter $\rho$, then
	\begin{enumerate}
	    \item the sequence generated by Algorithm \ref{alg:admm} has at least one limit point and each limit point is a stationary point of $L_\rho(\bm{x},\bm{z},\bm{\lambda})$ (defined in~\eqref{appeq:sta});
	    \item if $\Delta\bm{H}_\vartheta$ and $\Delta \bm{c}_\vartheta$ in \eqref{eq:decomp} are sufficiently small, then $L_\rho(\bm{x},\bm{z},\bm{\lambda})$ has a unique stationary point $(\bar{\bm{x}},\bar{\bm{z}},\bar{\bm{\lambda}})$ and the entire sequence generated by Algorithm \ref{alg:admm} converges to  $(\bar{\bm{x}},\bar{\bm{z}},\bar{\bm{\lambda}});$ 
	    furthermore, if $\Delta\bm{H}_\vartheta=\bm{0}$ and $\Delta \bm{c}_\vartheta=\bm{0}$, then $\bar{\bm{x}}$ is equal to the corresponding true bias.
	\end{enumerate}

\end{thm}
\begin{proof}
	See Appendix \ref{app:thm}.
\end{proof}

The first statement of Theorem~\ref{thm:admm} is a direct result of \cite[Corollary 2]{wang2019global}, which ensures the convergence of the ADMM for a class of the nonconvex problem (with a Lipschitiz continuous gradient) over a compact manifold. 
The second statement of Theorem ~\ref{thm:admm} on the uniqueness of the stationary point is due to the special structure of problem~\eqref{eq:qcqp_ref}. The \emph{entire} sequence convergence in Theorem~\ref{thm:admm} is a combined result of the general convergence of the ADMM in \cite[Corollary 2]{wang2019global} and the special structure of problem \eqref{eq:qcqp_ref}. 

\subsection{Proposed BCD Algorithm}
{\black{
The proposed BCD algorithm for solving problem \eqref{eq:opt_nls} is summarized as Algorithm \ref{alg:BCD}. Three remarks on its convergence, computational complexity, and comparison with other BCD variants are given in order. 

\begin{algorithm}[h]  
	\caption{Proposed BCD Algorithm for Problem \eqref{eq:opt_nls}}  
	\begin{algorithmic}[1]  
		\REQUIRE Sensor measurements $\{\bm{z}_k\}^K_{k=1}$ 
		\STATE Initialize $\Delta\bm{\rho}^0,\Delta\bm{\eta}^0,\Delta\bm{\alpha}^0,\Delta\bm{\beta}^0,\Delta\bm{\gamma}^0$, and $\bm{v}^0$
		\FOR{$t=0,1,2,\ldots,$}
		\STATE Update $\bm{v}^{t+1}$ by \eqref{eq:opt_solu_v};
		\STATE Update $\Delta\bm{\rho}^{t+1}$ by \eqref{eq:opt_solu_rho};
		\STATE Sequentially update $\Delta\bm{\eta}^{t+1}$,$\Delta\bm{\alpha}^{t+1}$,$\Delta\bm{\beta}^{t+1}$,$\Delta\bm{\gamma}^{t+1}$ by Algorithm \ref{alg:admm}, respectively;
		\ENDFOR   
		\ENSURE Sensor biases $\Delta\bm{\rho}^*,\Delta\bm{\eta}^*,\Delta\bm{\alpha}^*,\Delta\bm{\beta}^*,\Delta\bm{\gamma}^*$.
	\end{algorithmic}  
	\label{alg:BCD}  
\end{algorithm}

\textbf{Convergence of Algorithm \ref{alg:BCD}.}
The well-known convergence analysis of BCD~\cite{bertsekas1999nonlinear,wright2015coordinate} shows that if all subproblems in the proposed BCD Algorithm~\ref{alg:BCD} have unique solutions and can be globally solved, then the iterates can converge to a stationary point of the non-convex problem~\eqref{eq:opt_nls}. The uniqueness and the global optimality of solutions $\bm{v}^*$ (in~\eqref{eq:opt_solu_v}) for subproblem~\eqref{eq:sub_xi} and $\Delta\bm{\rho}^*$ (in~\eqref{eq:opt_solu_rho}) for subproblem~\eqref{eq:sub_rho} can be guaranteed if $\bm{H}_v$ and $\bm{H}_{\rho}$ are of full rank. For subproblems~\eqref{eq:sub_eta}--\eqref{eq:sub_gamma}, the full rankness of $\bm{H}_{\vartheta}$ together with sufficiently small $\Delta\bm{H}_\vartheta$ and $\Delta \bm{c}_\vartheta$ in~\eqref{eq:decomp} further ensure that each of the corresponding subproblems has a unique solution and can be numerically attained by the proposed ADMM Algorithm~\ref{alg:admm} (cf. Theorem~\ref{thm:admm}). According to the definitions of $\bm{H}_v$, $\bm{H}_{\rho}$, and $\bm{H}_{\vartheta}$ in Appendix~\ref{app:deriv} of the  Supplementary Material and Assumption~\ref{assp:regular}, we know that (i) $\bm{H}_v$ in~\eqref{eq:opt_v} is always of full rank since each sensor has at least one measurement; (ii) $\bm{H}_\rho$ in~\eqref{eq:opt_rho} and $\bm{H}_\vartheta$ in~\eqref{eq:qcqp} are of full rank with probability one (whose proof is provided in Appendix D of the Supplementary Material). Based on the above discussion, we can conclude that Algorithm~\ref{alg:BCD} can converge to a stationary point of the non-convex problem~\eqref{eq:opt_nls} if those $\Delta\bm{H}_\vartheta$ and $\Delta \bm{c}_\vartheta$ along the BCD iterates are all sufficiently small. Finally, we remark that this sufficient small condition provides a theoretical understanding for the convergence of Algorithm~\ref{alg:BCD} and in general it depends on the initial point. Our numerical simulation results in Section~\ref{sec:simu} suggest that such a condition always holds in various scenarios by simply initializing all biases to be zeros. The results in the noiseless setting (Section~\ref{subsec:simulglobal}) even verify the global optimality of Algorithm~\ref{alg:BCD}. 


\textbf{Computational complexity of Algorithm \ref{alg:BCD}.} All subproblems in the proposed BCD updates \eqref{eq:sub_xi}--\eqref{eq:sub_gamma} take two special forms, i.e., unconstrained linear LS problems (i.e., subproblems \eqref{eq:opt_v} and \eqref{eq:opt_rho}) and quadratic programs with special non-convex constraints (i.e., subproblem \eqref{eq:qcqp}). It can be shown that the computational complexity of solving subproblem \eqref{eq:opt_v}, \eqref{eq:opt_rho}, and \eqref{eq:qcqp} is $\mathcal{O}(K^2+K)$, $\mathcal{O}(M^{2}+MK+M)$, and $\mathcal{O}\left(I\left((2M)^2+2MK+2M\right)\right)$ (where $I$ is the total iteration number of the ADMM), respectively. It is worthwhile mentioning that the computational complexity of solving \eqref{eq:qcqp} by using the ADMM, which is $\mathcal{O}(I(2M)^2),$  is generally much less than that of using the SDR technique, which is $\mathcal{O}((2M)^{4.5})$~\cite{luo2010semidefinite}. The computational time comparison  
in Table~\ref{tab:time} (Section~\ref{sec:simul:sub_time}) demonstrates the much better computational efficiency of using the ADMM over the SDR based approach for solving subproblem \eqref{eq:qcqp}.

\textbf{BCD variants.}
Finally, we remark that there are several nice extensions of the BCD framework such as the block successive minimization method~\cite{razaviyayn2013unified} and the maximum block improvement (MBI) method~\cite{aubry2018new}. Both of them can be applied to solve problem~\eqref{eq:opt_nls} with guaranteed convergence (with a suitable approximation of the objective function). The sequential optimization version of MBI~\cite{aubry2018new} can effectively handle nonconvex constraints. In this paper, we still use the basic version of BCD because the nonconvex BCD subproblems with respect to angle biases enjoy the unique solution property (cf. Theorem~\ref{thm:admm}), which reveals an interesting insight into the considered bias estimation problem that the true bias can be recovered under the ideal case (i.e., $\Delta\bm{H}_\vartheta=\bm{0}$ and $\Delta \bm{c}_\vartheta=\bm{0}$). MBI also enjoys the same property of the uniqueness of the subproblems but it requires more computational cost than classical BCD. A parallel implementation of MBI can improve its computational efficiency but this is beyond the scope of this paper. Simulation results in Section~\ref{subsec:simulglobal} further show that all biases are exactly recovered by our proposed BCD Algorithm~\ref{alg:BCD} in the noiseless case.

}}


\section{Simulation Results}
\label{sec:simu}
{\black{
In this section, we present simulation results to illustrate the performance of using our proposed BCD Algorithm~\ref{alg:BCD} to solve the proposed formulation~\eqref{eq:opt_nls}. 

\textbf{{ Scenario setting:}} We consider a scenario with four sensors and one target in the 3-dimensional space, as illustrated in Fig. \ref{fig:sc}. The locations and biases of each sensor are listed in Table \ref{tab:senbias}. The target moves with a nearly constant velocity $\mathbb{E}[ \dot{\bm{\xi}}_k ] = [0,0.3,0]\,$km/s with the initial position being $[-30,-5,8]\,$km. The presumed rotation angle $\bm{\zeta}_m$ is fixed at $(0^\circ,0^\circ,0^\circ)$ for all $m=1,2,3,4$. The sensors work in an asynchronous mode that four sensors measure the target positions every $10\,$s with different starting times, i.e., $2.5\,$s, $5\,$s, $7.5\,$s, $10\,$s, respectively. Moreover, each sensor sends $20$ stamped measurements to a fusion center for sensor registration and the total observation last $210\,$s. All simulations are done on a laptop (Intel Core i7) with Matlab2018b. 

\textbf{{ Baseline approaches:}} We select three representative approaches in the literature and extend them to the considered 3-dimensional case where all sensors are biased and asynchronous. The first approach is the augmented state Kalman filtering (ASKF) approach \cite{Zhou2004A} which treats sensor biases as the augmented states and uses the Kalman filter to jointly estimate sensor biases and target states. The second approach is  the LS approach \cite{Fortunati2011Least} which solves the proposed NLS formulation in \eqref{eq:opt_nls} using the { linear approximation} (referred as linearized LS). The third approach is the ML approach \cite{Ristic2003Sensor,Fortunati2013On}, which is originally proposed for the synchronous multi-sensor system \cite{Ristic2003Sensor} or for the scenario with one sensor being bias-free \cite{Fortunati2013On}. To extend it to the considered 3-dimensional case with a stable numerical behavior, we first estimate the target state by the smoothed Kalman filter (SKF) and then estimate sensor biases by the Gaussian-Newton method \cite{bertsekas1999nonlinear}. For simplicity, we refer this approach as SKF-GN. 

{ \textbf{Proposed algorithms' parameter selection:} In all of our following simulations, the parameters in the BCD and ADMM algorithms are selected as follows: the proposed BCD algorithm is initialized with all biases being zeros and the proposed ADMMs for solving subproblems \eqref{eq:sub_eta}--\eqref{eq:sub_gamma} are also initialized with the corresponding angle biases being zeros; the penalty parameter $\rho$ in the ADMM is set to be the average of the diagonal elements of matrix $\bm{H}^T_{\vartheta}\bm{Q}\bm{H}_{\vartheta}$ in \eqref{eq:qcqp}; the proposed BCD algorithm is terminated when the maximum difference between two consecutive BCD iterates is less than $10^{-5}$  and the ADMM is terminated when both primal and dual residuals are less than $10^{-9}$.
}\begin{table}
	\centering
	\caption{Sensor Positions and Biases [km, degree].}
	\begin{tabular}{c c c c c c c c}
		\toprule
		& Position    & $\Delta \rho$ &$\Delta \eta$ &$\Delta \alpha$& $\Delta \beta$&$\Delta \gamma$ \\ 
		\midrule
		Sensor 1 & $[0, -15, 0]$                  & -0.5             & -2           &   -2    & 1     & -1 \\ 
		Sensor 2 &  $[-20, 5, 2]$                 & 0.3             &    -2        &   2    &   -1 &   -1   \\
		Sensor 3 & $[20, 5, 0]$                     & -0.4         &  -2          &   2    &  -2   &2 \\ 
		Sensor 4 & $[0, 10, -1]$                       &   -0.2            &  -1          &   -2    &  -1 & 1    \\ 
		\midrule
	\end{tabular}
	\label{tab:senbias}
\end{table}
\begin{figure}
	\centering 
	\includegraphics[width=0.9\linewidth]{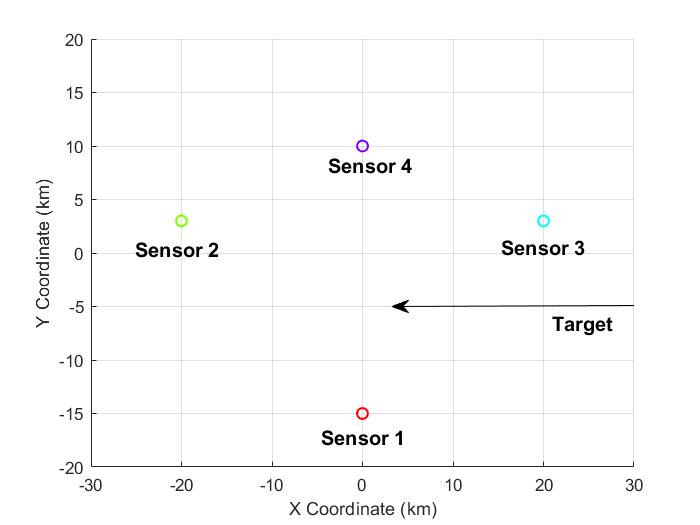}
	\caption{An illustration of the simulation scenario.}
	\label{fig:sc}
\end{figure}

{ 
\subsection{Global Optimality of Proposed BCD in the Noiseless Case}\label{subsec:simulglobal}
In this subsection, we test the convergence behavior of the proposed BCD for solving problem ~\eqref{eq:opt_nls} with $\bm{Q}_k=\bm{I}$, $\forall k$. We consider the noiseless case with the measurement noise variance $\sigma_\rho^2=0\,\textrm{km}$, $\sigma_\eta^2=\sigma_\phi^2=0^\circ$ and the target motion noise $q=0\,\textrm{m}^2/\textrm{s}^3$. In such a case, the true biases achieve a zero loss and hence they are a global solution of problem~\eqref{eq:opt_nls}. Fig.~\ref{fig:globalcovg} plots the optimization residuals over the number of iterations and the elapsed time. The optimization residuals for different kinds of biases are defined as the sum of the squared estimation error over all sensors, e.g., the optimization residual for the range bias is $\sum_{m=1}^M (\Delta\hat{\rho}_m - \Delta\rho_m)^2$, where $\Delta\hat{\rho}_m$ is the estimated range bias of sensor $m$. The convergence curves in Fig.~\ref{fig:globalcovg} demonstrate the global optimality of the proposed BCD algorithm in the noiseless case, i.e., the loss function converges to zero and the finally returned solution by the BCD algorithm is very close to the true biases. The similar convergence behavior has also been observed in many other scenarios. In particular, Fig.~\ref{fig:globalcovgall} shows the convergence curves of the proposed BCD algorithm in $20$ randomly generated scenarios\footnote{{ The sensors' positions are uniformly placed over the region $[-20,20]^2\times [-1,1]\,$km and sensors' biases are generated from the Gaussian distribution, i.e., $\Delta\rho\sim\mathcal{N}(0,(1\, \textrm{km})^2)$, and $\Delta\eta,\Delta\alpha,\Delta\beta,\Delta\gamma\sim\mathcal{N}(0,(3^\circ)^2)$.}}. 

\begin{figure}
	\centering 
	\includegraphics[width=1\linewidth]{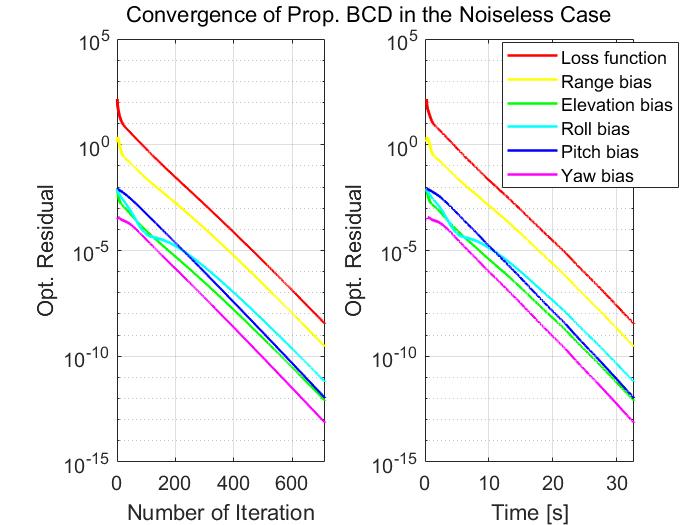}
	\caption{{Convergence of the proposed BCD algorithm in the noiseless case of the scenario illustrated in Fig. \ref{fig:sc}.}}
	\label{fig:globalcovg}
\end{figure}

\begin{figure}
	\centering 
	\includegraphics[width=1\linewidth]{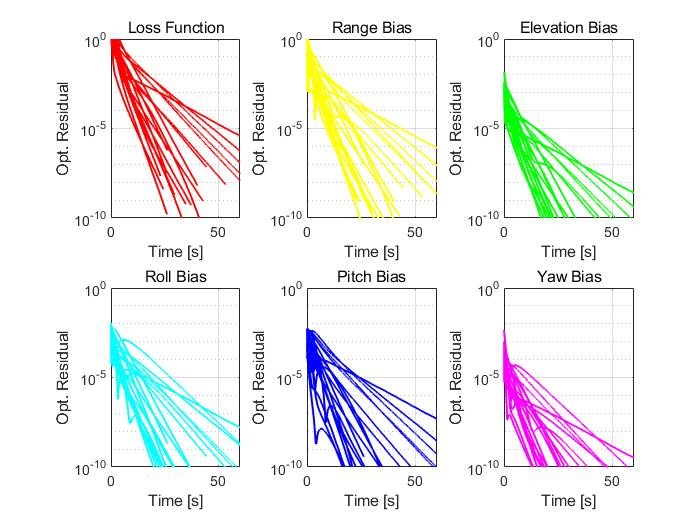}
	\caption{{ Convergence of the proposed BCD algorithm in $20$ randomly generated scenarios, where each curve corresponds to one randomly generated scenario.}}
	\label{fig:globalcovgall}
\end{figure}
}

\subsection{Estimation Performance in the Noisy Case}
{ In this subsection, we evaluate the estimation performance of the proposed BCD algorithm in the noisy case. 
Note that the global optimality in the noisy case is generally hard to examine, as the global solution of the corresponding problem is unknown. However, our numerical results show that the solution found by the BCD algorithm achieves a pretty satisfactory estimation performance.

In the rest of simulations, we use the root mean square error (RMSE) as the estimation performance metric. The root of hybrid Cramer-Rao lower bound (RHCRLB) \cite{Noam2009tightness,Fortunati2011Least} is used as the performance benchmark.} The RMSEs and RHCRLBs in all of the following figures are obtained by averaging 100 independent Monte Carlo runs. We will compare the proposed BCD algorithm with the aforementioned three approaches under different setups.
}}


\subsubsection{Estimation Performance Comparison}\label{sec:simul:sub_est}
{\black{
In this part, we compare the estimation performance of different approaches with $\sigma_\rho=0.05\,\textrm{km}$, $\sigma_\eta=\sigma_\phi=0.02^\circ$, and $q=0.5\,\textrm{m}^2/\textrm{s}^3$. { Two different choices of $\bm{Q}_k$ in the proposed formulation~\eqref{eq:opt_nls} are considered}: one is $\bm{Q}_k=\bm{I}$ which corresponds to classical nonlinear LS estimation (referred as BCD-NLS), and the other is $\bm{Q}_k$ in \eqref{eq:Qk} which can be regarded as pseudo ML estimation (referred as BCD-PML). The RMSEs of the range bias of different sensors are shown in Fig. \ref{fig:siml_bars} and the RMSEs of different kinds of angle biases are shown in Fig. \ref{fig:siml_bars2}. We can observe that the two proposed approaches (i.e., BCD-NLS and BCD-PML) achieve much smaller RMSEs than the other three approaches. The main reason for this is that all the other three approaches utilize the { linear approximation} to deal with the nonlinearity, and such approximation usually results into a model mismatch which degrades the estimation performance. Notice that the only difference between Linearized LS and BCD-NLS is the algorithm used for solving problem \eqref{eq:opt_nls} with $\bm{Q}=\bm{I}$. More specifically, Linearized LS utilizes the { linear approximation} which admits a closed-form solution for estimating biases while BCD-NLS uses the proposed BCD algorithm to deal with the nonlinearity. Clearly, BCD-NLS achieves smaller RMSE. Furthermore, by incorporating the second-order statistics, BCD-PML achieves a smaller RMSE than BCD-NLS, which demonstrates the effectiveness of the proposed choice of  $\bm{Q}_k$ in  \eqref{eq:Qk}.
}}
\begin{figure}
	\centering 
	\includegraphics[width=0.8\linewidth]{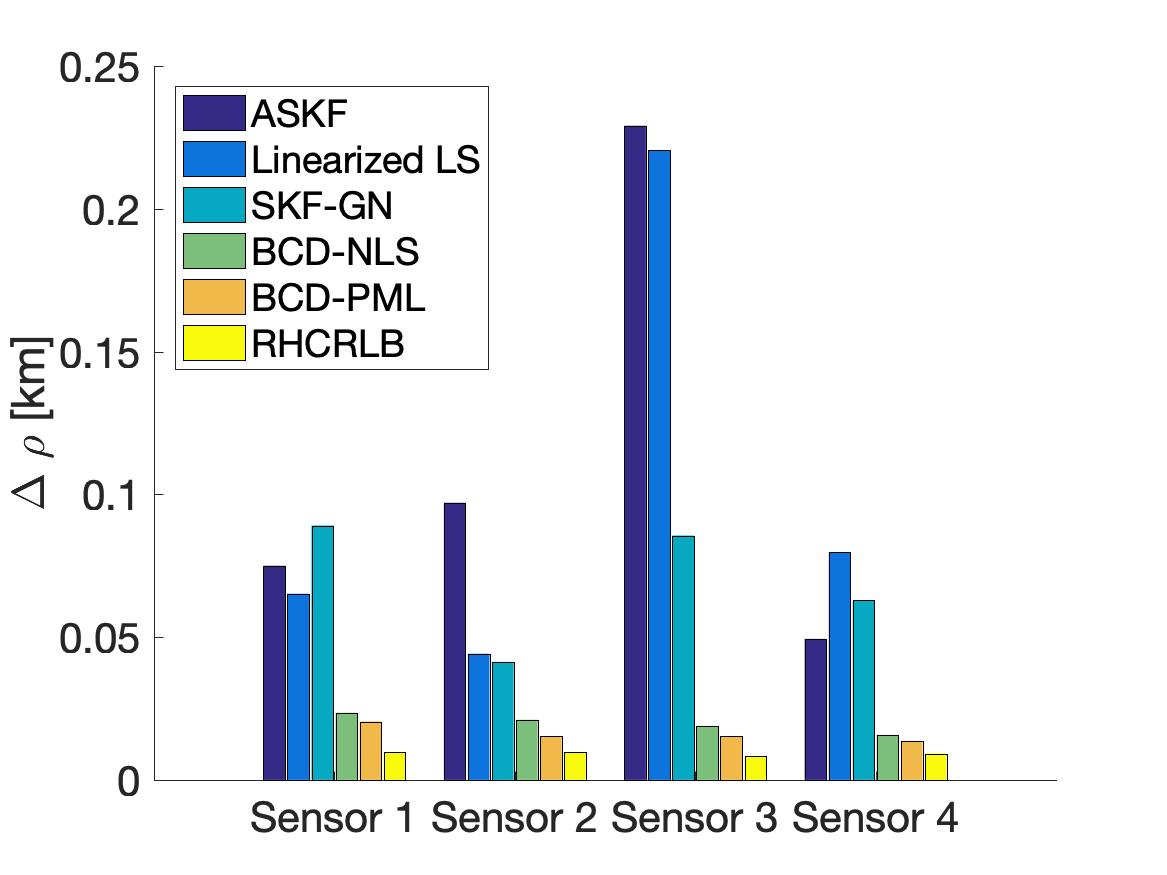}
	\caption{RMSE and RHCRLB for { $\sigma_\rho=0.05\,\textrm{km}$}, $\sigma_\eta=\sigma_\phi=0.02^\circ$, and $q=0.5\,\textrm{m}^2/\textrm{s}^3$.}
	\label{fig:siml_bars}
\end{figure}
\begin{figure*}
	\centering 
	\subfigure[Elevation bias.]{
	\includegraphics[width=0.23\linewidth]{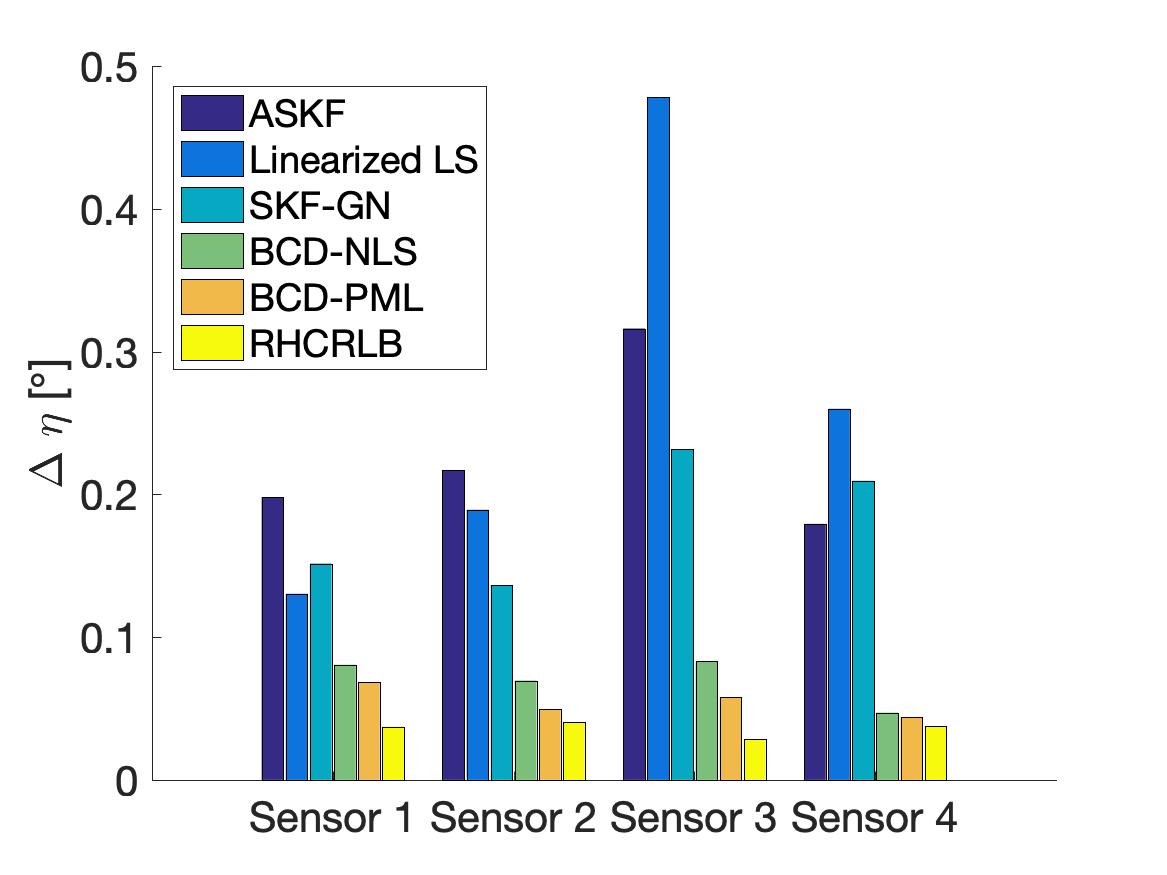}}
	\subfigure[Roll bias.]{
	\includegraphics[width=0.23\linewidth]{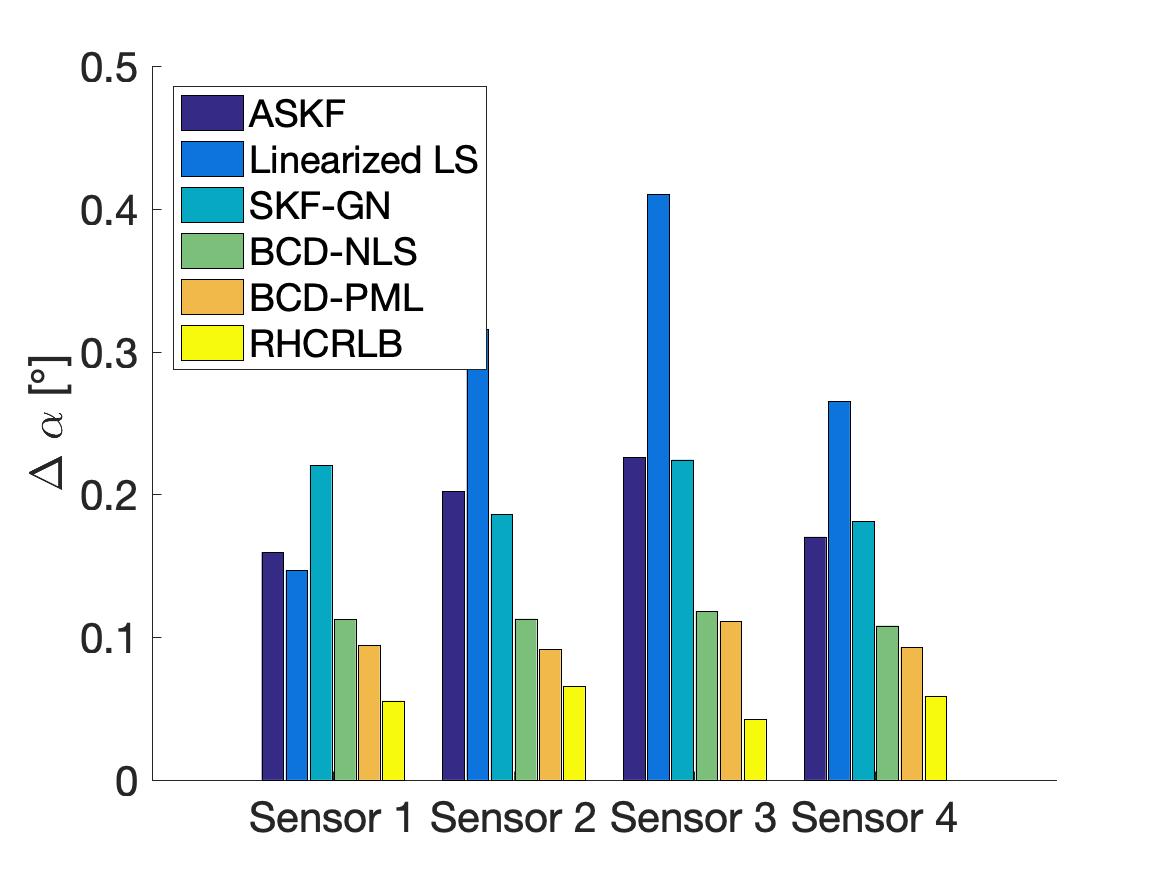}}
	\subfigure[Pitch bias.]{
	\includegraphics[width=0.23\linewidth]{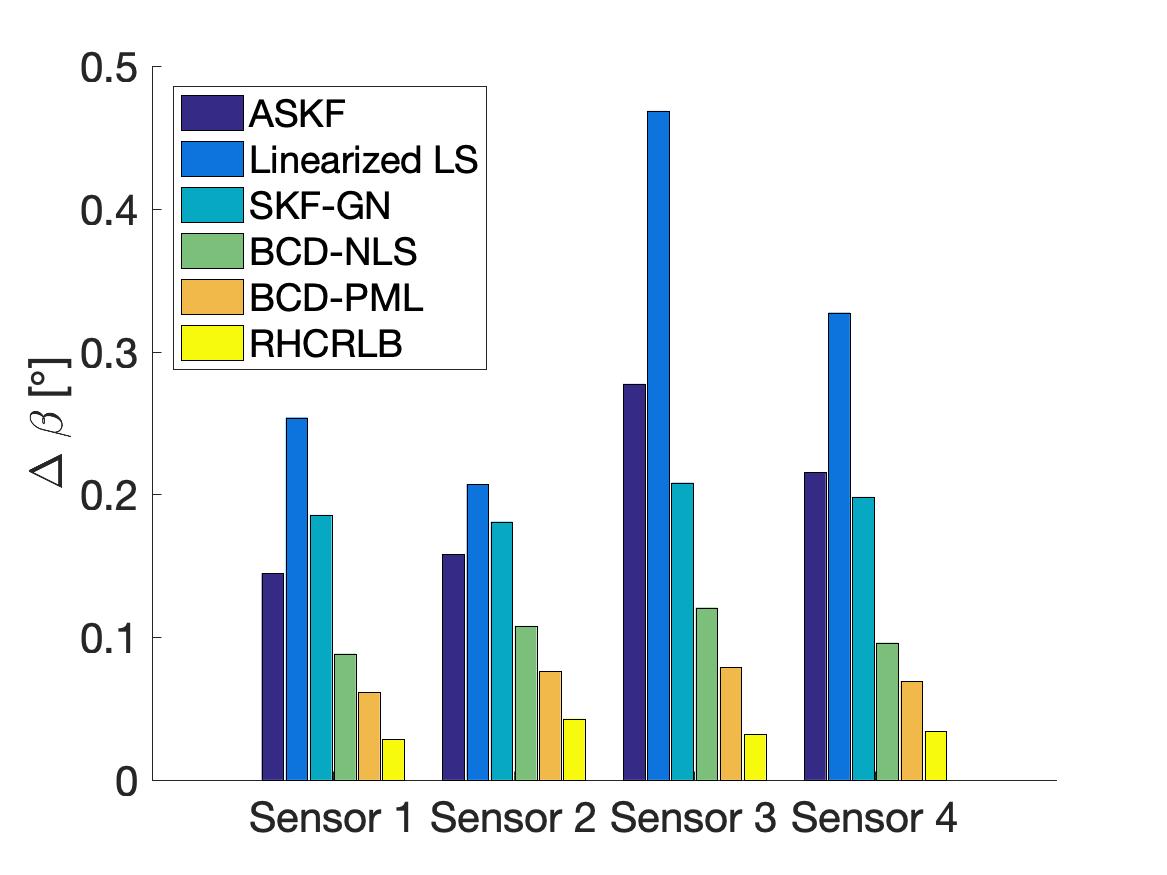}}
	\subfigure[Yaw bias.]{
	\includegraphics[width=0.23\linewidth]{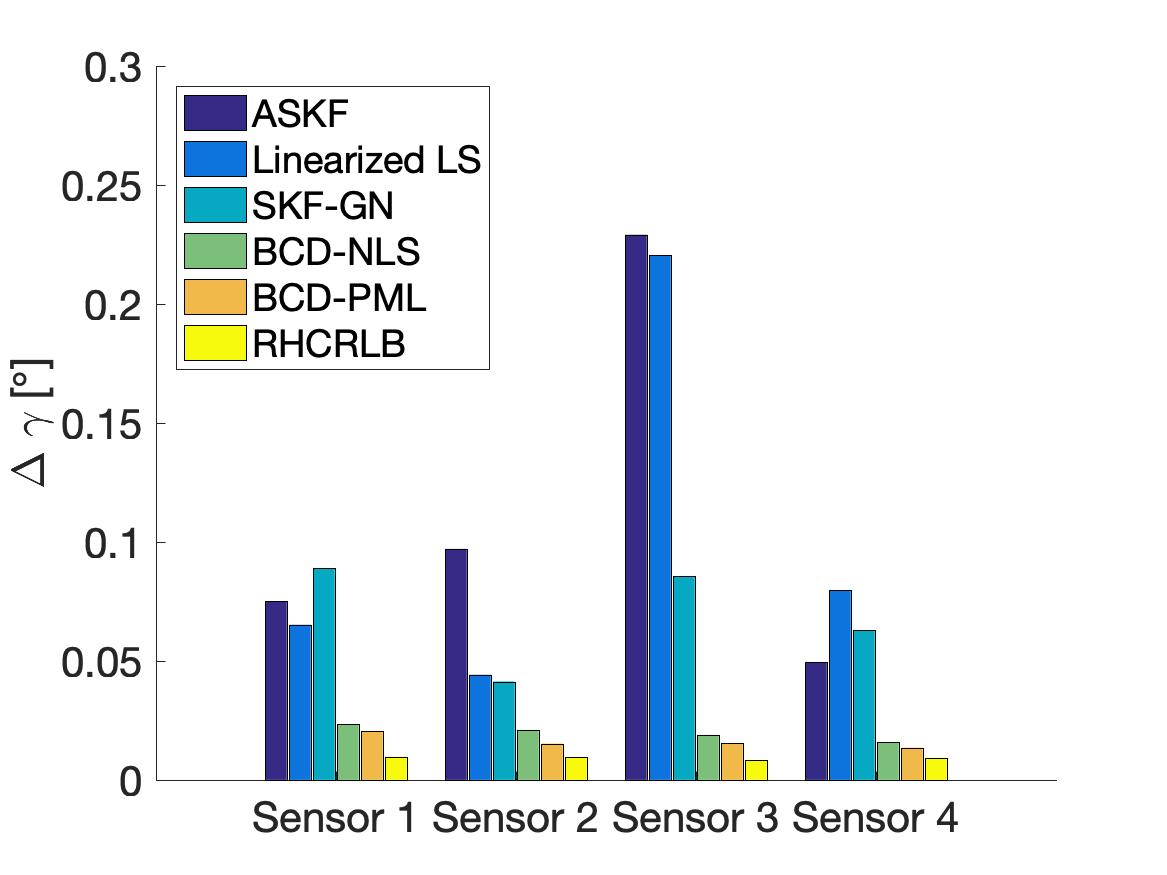}}
	\caption{RMSE and RHCRLB for $\sigma_\rho=0.05\,\textrm{km}$, $\sigma_\eta=\sigma_\phi=0.02^\circ$, and $q=0.5\,\textrm{m}^2/\textrm{s}^3$.}
	\label{fig:siml_bars2}
\end{figure*}

\subsubsection{Computational Time Comparison}\label{sec:simul:sub_time}
{\black{
In this part, we compare the computational time of different approaches. The averaged computational time of different approaches is summarized in Table~\ref{tab:time}. Notice that the SDR technique in \cite{Pu2018} can be extended to deal with the nonlinearity in the 3-dimensional scenario\footnote{{\black{Under mild conditions, the SDR technique gives the same solution of problem \eqref{eq:qcqp} obtained by the proposed ADMM. In particular, the proof in Appendix \ref{app:thm} can be modified to show that the SDR admits a unique rank-one solution if $\Delta\bm{H}$ and $\Delta\bm{c}$ in \eqref{eq:qcqp} are sufficiently small. This implies that the SDR technique and the ADMM give the same solution.}}}. We also include the computational time of employing the SDR technique to solve QCQP subproblems in form of \eqref{eq:qcqp} to demonstrate the computational effectiveness of the proposed ADMM. { Values in the bracket in Table \ref{tab:time} correspond to the computational time of the BCD algorithm where CVX \cite{cvx} is used to solve the SDRs of the corresponding QCQP subproblems.} In our simulations, we observe that the difference of the final solutions returned by the BCD algorithm with the SDR technique and the ADMM (for solving the subproblems) is less than or equal to $10^{-5}$. { This implies the two approaches return almost the same solutions}.

From Table~\ref{tab:time}, it can be observed that the { linear approximation} simplifies the estimation procedure which makes the three approaches, i.e., ASKF, Linearized LS, and SKF-GN, take much less computational time than the proposed two approaches. This is because all of these three approaches are based on the { linear approximation}, which simplify the estimation procedure but also degrades the estimation performance. Comparing our proposed algorithms with the ADMM and the SDR technique (in bracket) being used for solving the subproblems, we can observe that the proposed ADMM significantly reduces the computational time. We remark that sensor biases in practice usually change slowly \cite{Fortunati2011Least}, i.e., in hours or days,  and thus the computational time of the proposed approaches is acceptable for practical applications. 
}}
\begin{table}
	\centering
	\caption{Computational Time [second]}
	\begin{tabular}{c c c c c c}
		\toprule
		 ASKF    & Linearized LS & SKF-GN  & BCD-NLS&   BCD-PML\\ 
		\midrule
		0.03    & 0.02         &    0.16             &     34.61 (1309.21)     &  33.28 (1289.54) \\ 
		\midrule
	\end{tabular}
	\label{tab:time}
\end{table}
\begin{figure}
	\centering 
	\includegraphics[width=0.9\linewidth]{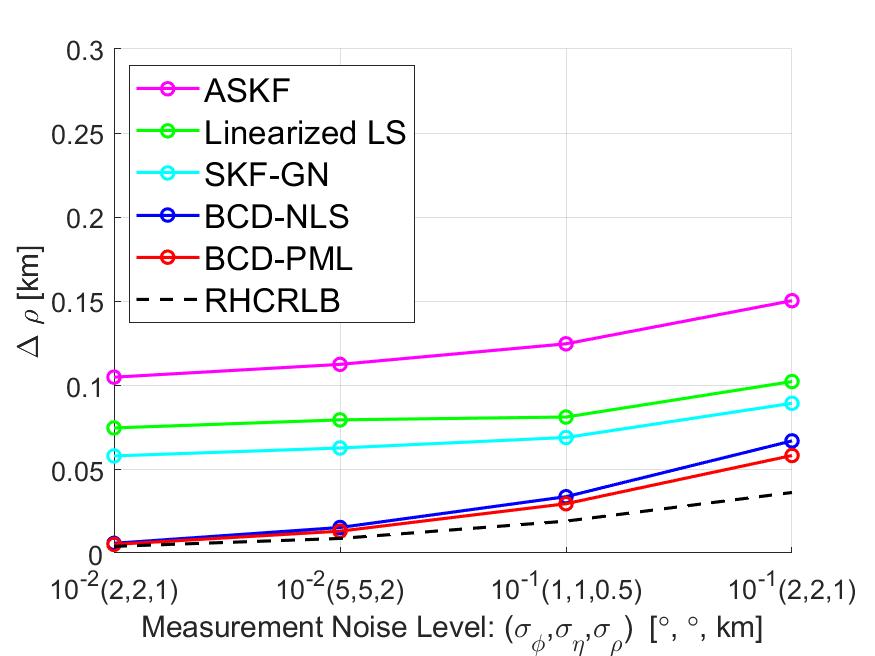}
	\caption{RMSE and RHCRLB of range and elevation biases with different measurement noise levels $(\sigma_\rho,\sigma_\phi,\sigma_\eta)$.}
	\label{fig:simul_mea}
\end{figure}
\subsubsection{Impact of Measurement Noise}\label{sec:simul:sub:mea}
In this part, we present some simulation results to illustrate the effect of the measurement noise on the estimation performance. We fix the motion noise density $q=0.5\,\textrm{m}^2/\textrm{s}^3$ and set different values of the measurement noise level $(\sigma_\rho,\sigma_\phi,\sigma_\eta)$. Since the measurement noise level has similar impacts on the estimation performance of the four sensors, we only present the averaged RMSEs and RHCRLBs over four sensors. The comparison of the range biases is presented in Fig. \ref{fig:simul_mea} and the comparison of other biases is presented in Fig. \ref{fig:simul_mea2}. It can be observed that BCD-PML achieves the smallest RMSEs among all cases and BCD-NLS is a little worse than BCD-PML. For the other three approaches, due to the model mismatch error introduced by the { linear approximation} procedure, they all suffer a ``threshold'' effect when the measurement noise is small, i.e., $\sigma_\phi=\sigma_\eta=0.02^\circ,\sigma_\rho=0.01\,\textrm{km}$. In contrast, the proposed BCD algorithm exploits the special problem structure and utilizes the nonlinear optimization techniques to deal with the nonlinearity issue. Therefore, BCD-NLS and BCD-PML do not have the `threshold' effects and their RMSEs are quite close to RHCRLBs when the measurement noise level is small.

\subsubsection{Impact of Target Motion Noise}\label{sec:simul:sub:mo}
In this part, we present some simulation results to illustrate the effect of the target motion noise on the estimation performance. We fix the measurement noise level to be $\sigma_\rho=0.1\,\textrm{km}$ and $\sigma_\phi=\sigma_\eta=0.1^\circ$ and change the values of the motion noise density $q$ in interval $[0.2,2]\,\textrm{m}^2/\textrm{s}^3$. Similar to Section \ref{sec:simul:sub:mea}, we compare the averaged RMSEs of the range in Fig. \ref{fig:simul_mo} and the RMSEs of other angle biases in Fig. \ref{fig:simul_mo2}. It can be observed that all approaches are somehow robust to the motion process noise. Similarly, we can observe that the proposed approaches perform better than the other three approaches and their RMSEs are very close to RHCRLBs.
\begin{figure}
	\centering 
	\includegraphics[width=0.9\linewidth]{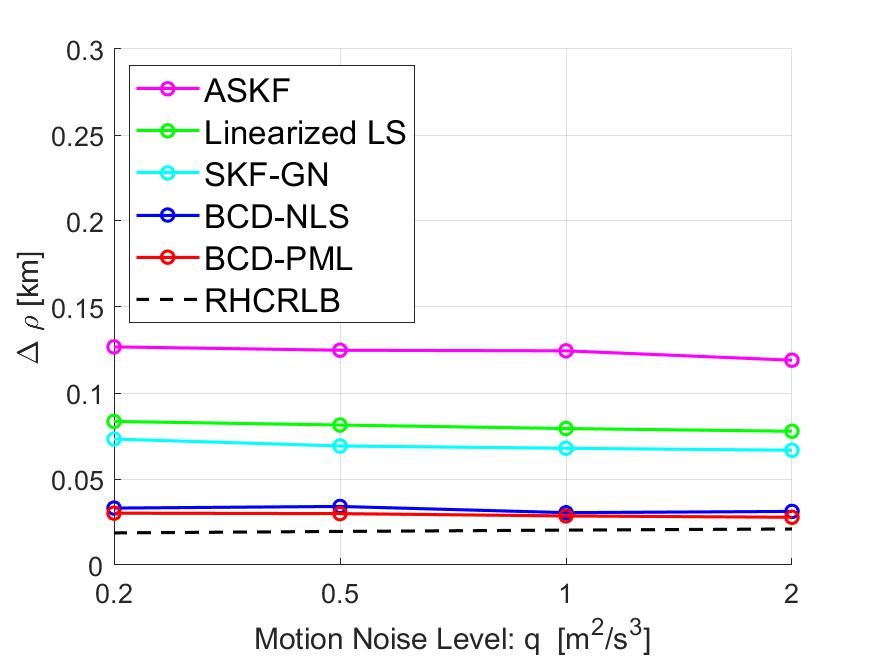}
	\caption{RMSE and RHCRLB range and elevation biases with different motion process noise levels $q$.}
	\label{fig:simul_mo}
\end{figure}

\subsubsection{Impact of Sensor Bias}\label{sec:simul:sub:bias}
To illustrate the effect of the bias level on the estimation performance (especially for the other three approaches which use the { linear approximation}), we present some simulation results with different levels of the sensor biases. Specifically, we fix { $\sigma_\rho=0.1\,\textrm{km}$}, $\sigma_\phi=\sigma_\eta=0.1^\circ$, and $q=0.5\,\textrm{m}^2/\textrm{s}^3$, and change the bias values in Table \ref{tab:senbias} by multiplying them with a positive constant $c$, i.e., $c=0.2,0.5,1,2$. The averaged RMSEs of the range biases and other angle biases are compared in Figs. \ref{fig:simul_bias} and \ref{fig:simul_bias2}, respectively. {\black{In the case where the bias level is small, i.e., $c=0.2$, all approaches have similar RMSEs since the { linear approximation} leads to a smaller model mismatch error. However, as $c$ increases, the RMSEs of the other three approaches increase very fast but the proposed two approaches are robust to the bias level.}}
\begin{figure}
	\centering 
	\includegraphics[width=0.9\linewidth]{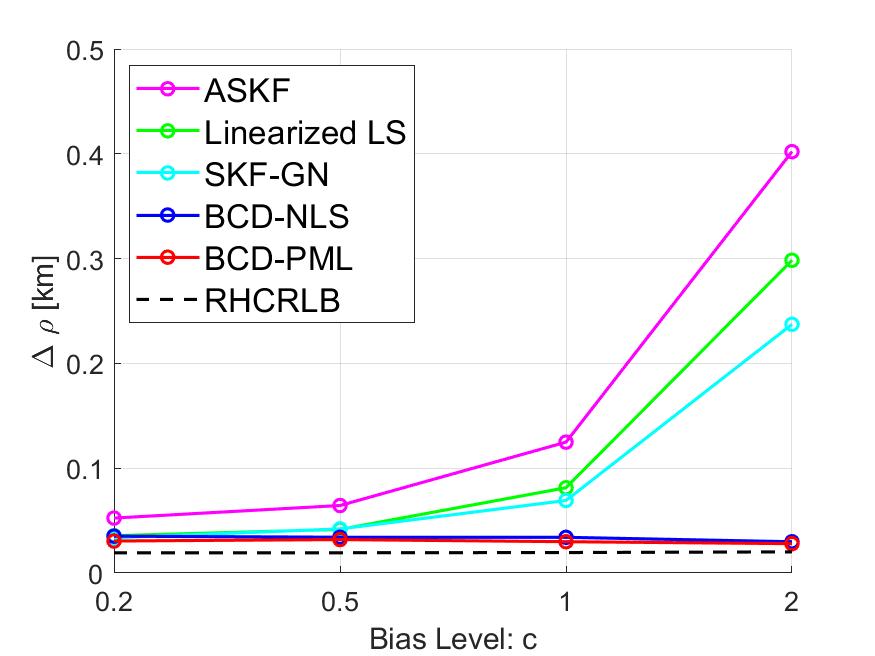}
	\caption{RMSE and RHCRLB range and elevation biases with different bias levels $c$.}
	\label{fig:simul_bias}
\end{figure}

\begin{figure*}
	\centering 
	\subfigure[Elevation bias.]{
		\includegraphics[width=0.23\linewidth]{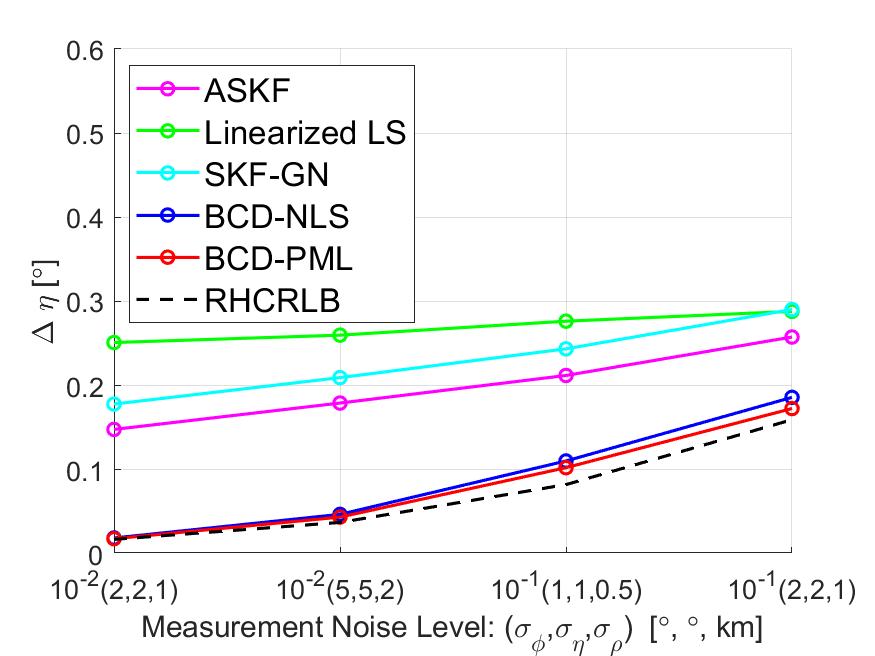}}
	\subfigure[Roll bias.]{
		\includegraphics[width=0.23\linewidth]{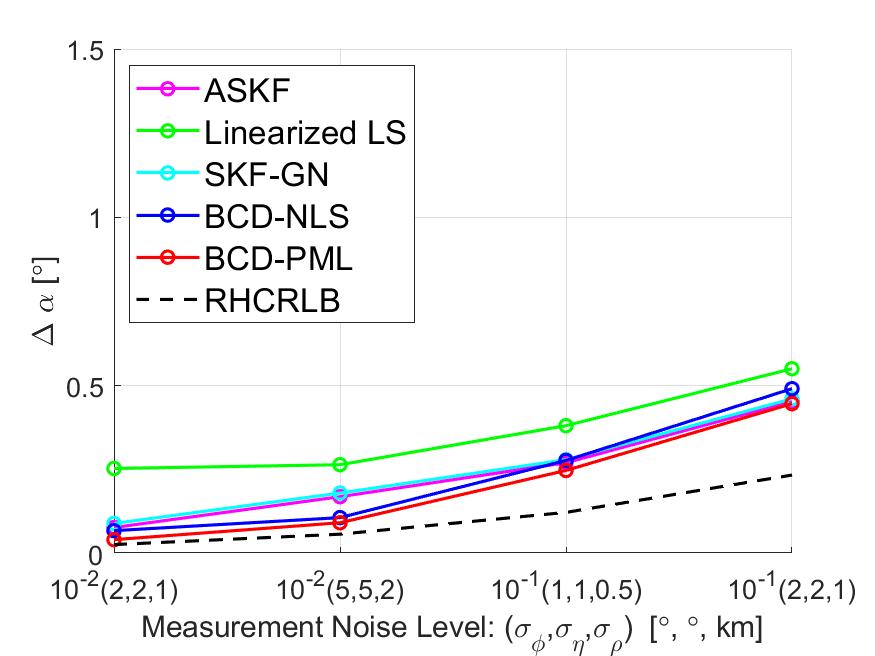}}
	\subfigure[Pitch bias.]{
		\includegraphics[width=0.23\linewidth]{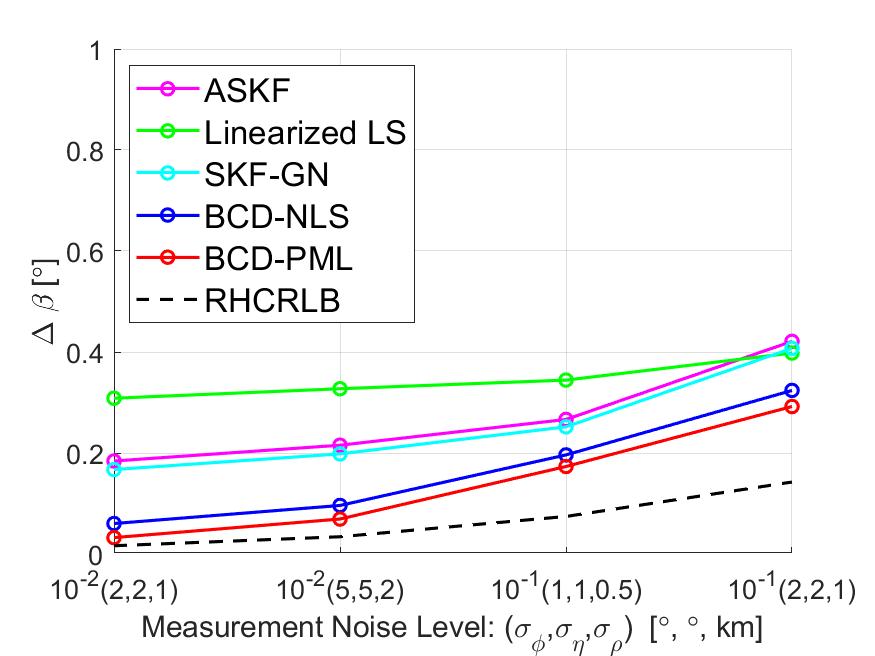}}
	\subfigure[Yaw bias.]{
		\includegraphics[width=0.23\linewidth]{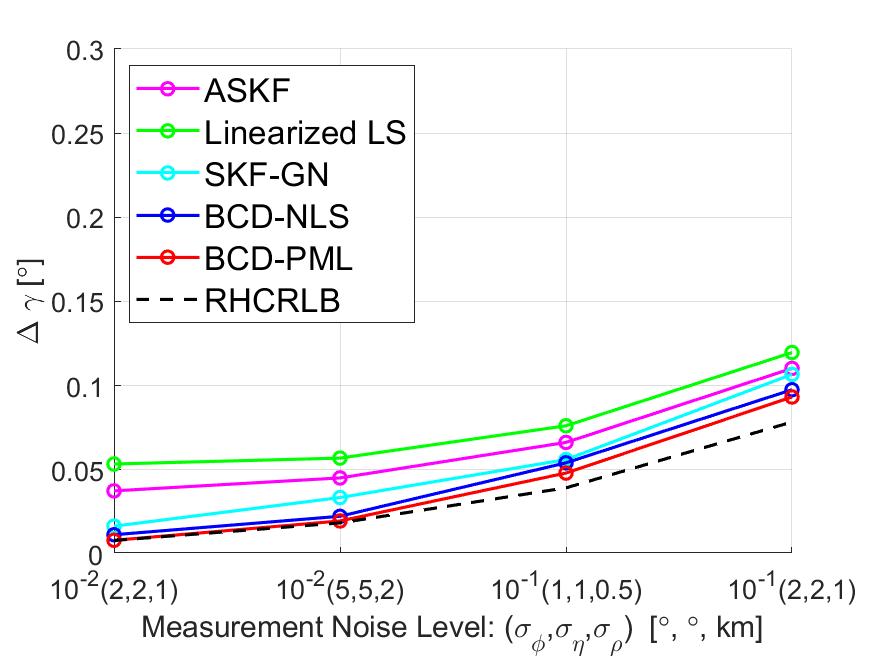}}
	\caption{RMSE and RHCRLB of rotation angle biases with different measurement noise levels $(\sigma_\rho,\sigma_\phi,\sigma_\eta)$.}
	\label{fig:simul_mea2}
\end{figure*}

\begin{figure*}
	\centering 
	\subfigure[Elevation bias.]{
		\includegraphics[width=0.23\linewidth]{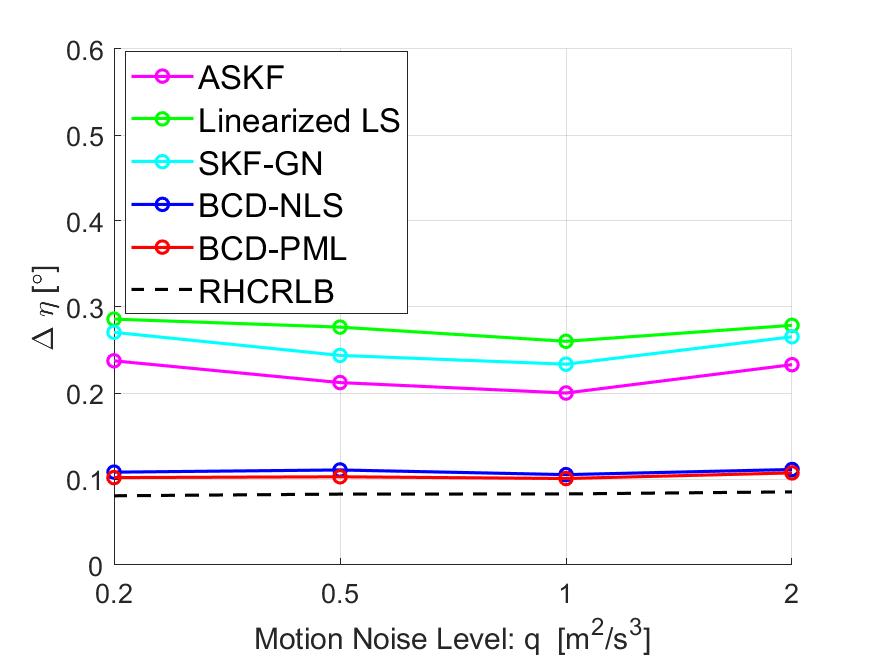}}
	\subfigure[Roll bias.]{
		\includegraphics[width=0.23\linewidth]{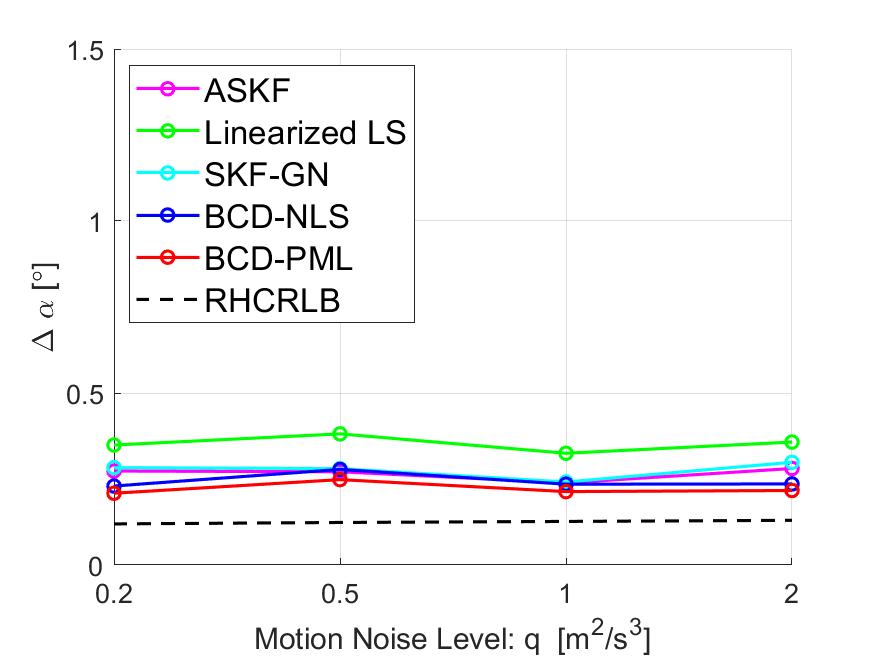}}
	\subfigure[Pitch bias.]{
		\includegraphics[width=0.23\linewidth]{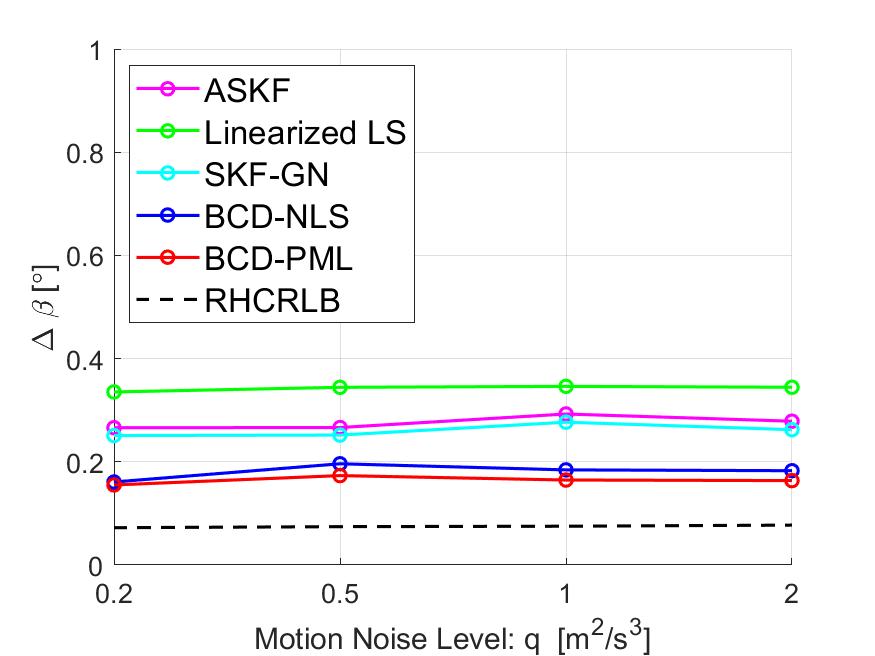}}
	\subfigure[Yaw bias.]{
		\includegraphics[width=0.23\linewidth]{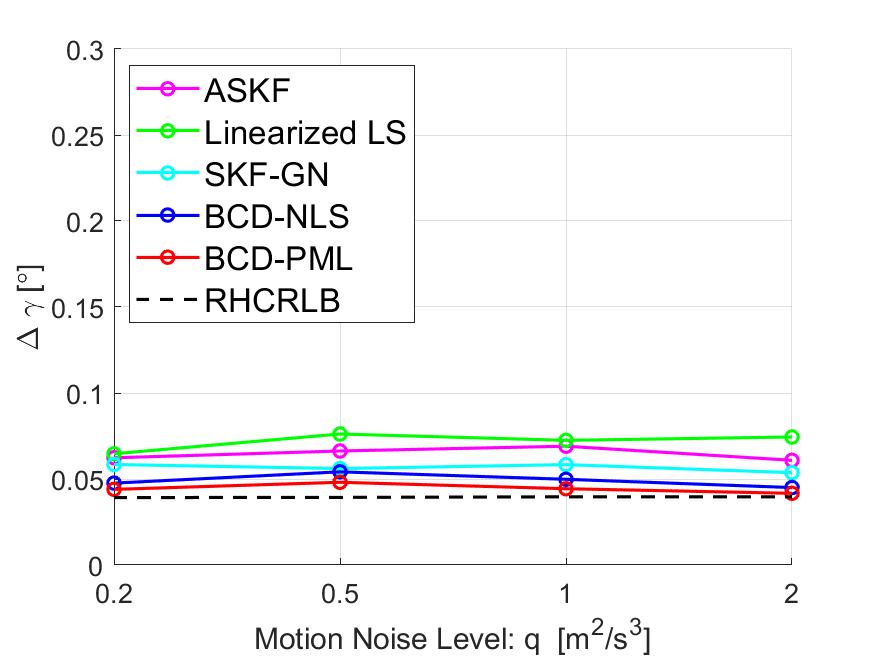}}
	\caption{RMSE and RHCRLB of rotation angle biases with different motion process noise levels $q$.}
	\label{fig:simul_mo2}
\end{figure*}

\begin{figure*}
	\centering 
	\subfigure[Elevation bias.]{
		\includegraphics[width=0.23\linewidth]{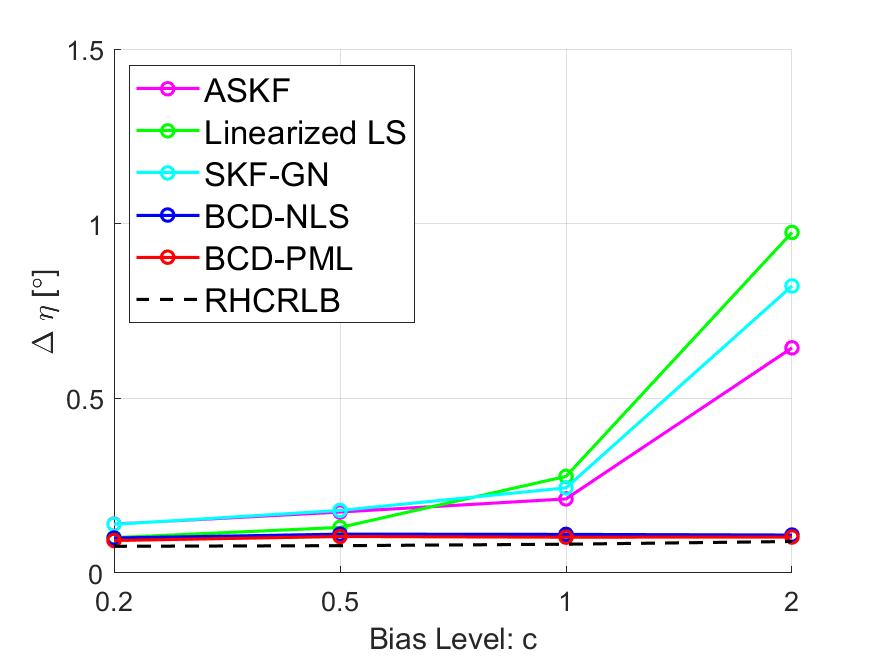}}
	\subfigure[Roll bias.]{
		\includegraphics[width=0.23\linewidth]{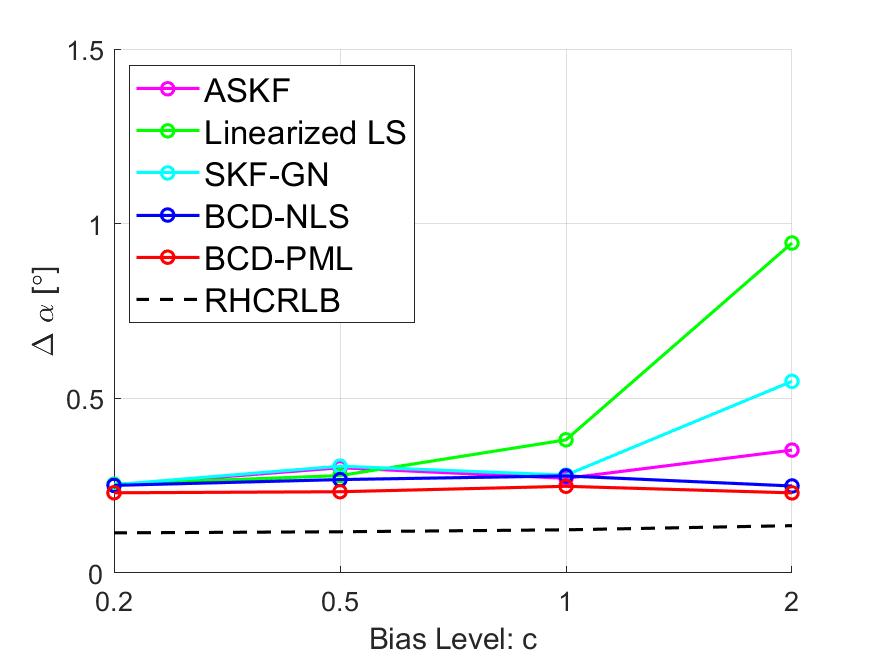}}
	\subfigure[Pitch bias.]{
		\includegraphics[width=0.23\linewidth]{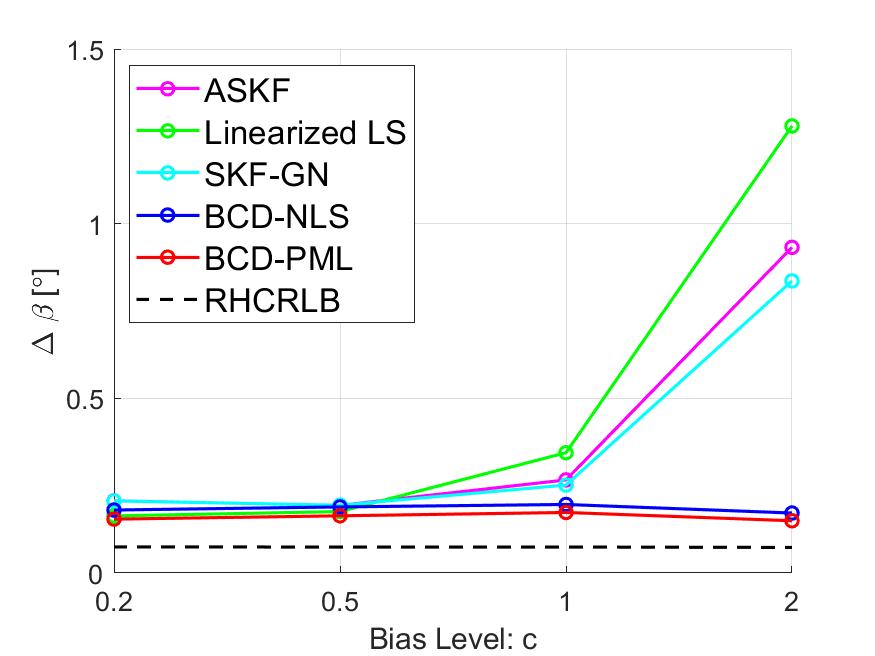}}
	\subfigure[Yaw bias.]{
		\includegraphics[width=0.23\linewidth]{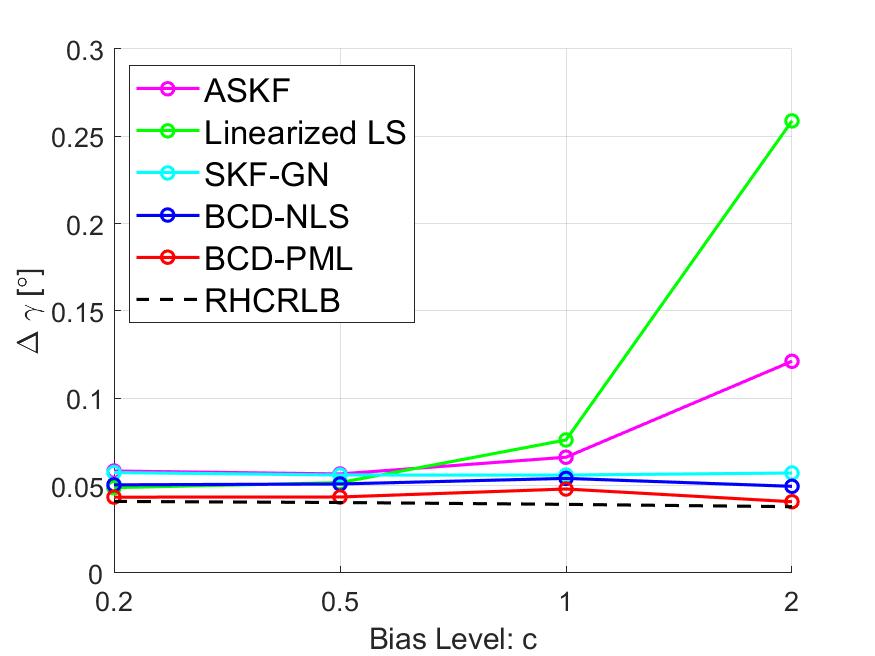}}
	\caption{RMSE and RHCRLB of rotation angle biases with different bias levels $c$.}
	\label{fig:simul_bias2}
\end{figure*}
\section{Conclusion and Future Work}
In this paper, we have presented a weighted nonlinear LS formulation for the 3-dimensional asynchronous multi-sensor registration problem by assuming the existence of a reference target moving with a nearly constant velocity. By choosing appropriate weight matrices, the proposed formulation includes classical nonlinear LS and ML estimation as special cases. Moreover, we have proposed an efficient BCD algorithm for solving the formulated problem and also have developed a computationally efficient algorithm based on the ADMM to globally solve nonconvex subproblems with respect to angle biases in the BCD iteration. Numerical simulation results demonstrate the effectiveness  and efficiency of the proposed formulation and the BCD algorithm. In particular, our proposed approaches can achieve significantly better estimation performance than existing celebrated approaches based on the linear approximation and our proposed approaches are the most robust to the sensor bias level.

It is important to note that the proposed formulation~\eqref{eq:opt_nls} implicitly assumes that all observed data are originated by a single moving target. Other important issues in practical multi-sensor systems such as missed detection, multiple targets, and clutters~\cite{bar1990tracking} are not considered. Such more complicated scenarios are of greatly practical interests, and dealing with the data association problem (with biased data) raised there is necessary~\cite{crouse2009sensor}. Combining the developed optimization technique together with advanced multi-target tracking techniques~\cite{meyer2018message} would be an interesting direction to further explore, especially for sensors on moving platforms. Finally, we remark that the use of the rotation matrix in~\eqref{eq: rotation} may lead to a gimbal lock problem when the pitch angle is $\pm 90^\circ$. However, in the context of this study, where sensors are fixed at a ground location with fixed positions and attitude angles, this issue can be avoided by pre-aligning the rotation angles between the local and global Cartesian coordinates being zeros. In contrast, for sensors mounted on moving platforms, it is advisable to use quaternions to prevent gimbal lock from occurring. Exploring techniques for compensating biases in quaternion representations would be an interesting future research.

\appendices
\section{Proof of Proposition~\ref{prop:amg}}\label{app:propamg}
Since the result is independent with respect to noise $\bm{w}_k$, we assume $\bm{w}_k=\bm{0}$ for notational simplicity. Define $(r,\omega,\upsilon)$ such that  $[r\cos\omega\cos\upsilon,r\sin\omega\cos\upsilon,r\sin\upsilon]^T=R_y^T(\beta_{s_k}+\Delta\beta_{s_k})R_x^T(\alpha_{s_k}+\Delta\alpha_{s_{k}})(\bm{\xi}_k-\bm{p}_{s_k})$. Then, by  \eqref{eq:mea_model} and omitting the subscripts $k$ and $s_k$, we have
	\begin{align*}
	\small 
	 \begin{bmatrix} {\rho}\\ {\phi} \\ {\eta} \end{bmatrix}&=h\left(R_z^T(\gamma+\Delta\gamma)\begin{bmatrix} r\cos\omega\cos\upsilon\\r\sin\omega\cos\upsilon\\r\sin\upsilon\end{bmatrix}\right)-\begin{bmatrix} \Delta\rho\\ \Delta\phi \\ \Delta\eta\end{bmatrix}\\
	 &=h\left( \begin{bmatrix} r\cos(\omega-\gamma-\Delta\gamma)\cos\upsilon\\r\sin(\omega-\gamma-\Delta\gamma)\cos\upsilon\\r\sin\upsilon\end{bmatrix}\right)-\begin{bmatrix} \Delta\rho\\ \Delta\phi \\ \Delta\eta\end{bmatrix}\\
	 &=\begin{bmatrix} r-\Delta\rho\\ \omega-\gamma-\Delta\gamma-\Delta\phi \\ \upsilon-\Delta \eta\end{bmatrix}.
	\end{align*}
This completes the proof.

\section{Proof of Theorem \ref{thm:admm}}\label{app:thm}

In this appendix, we use notation $\Delta\bar{\bm{\vartheta}}$ to denote the true value of the corresponding bias and define $\bm{x}(\Delta{\bm{\vartheta}})= [\cos\Delta {\vartheta}_1, \sin\Delta {\vartheta}_1 ,\dots ,\cos\Delta {\vartheta}_M, \sin\Delta {\vartheta}_M]^T$. Also, for notational simplicity, we ignore the subscript $\vartheta$ of $\bm{H}_\vartheta$ and $\bm{c}_\vartheta$. Since the objective function in~\eqref{eq:qcqp} is a quadratic function whose gradient is Lipschitz continuous and the constraint set $\mathcal{C}$ is a smooth compact manifold, then by~\cite[Corollary 2] {wang2019global}, for a sufficiently large penalty parameter $\rho$, the sequence $\left\{\bm{x}^t,\bm{z}^t,\bm{\lambda}^t\right\}$ generated by Algorithm~\ref{alg:admm} has at least one limit point and any limit point is a stationary point of the augmented Lagrangian function $L_\rho(\bm{x},\bm{z},\bm{\lambda})$. This completes the first statement in Theorem~\ref{thm:admm}.

Next, we prove the second statement. From the definition of $L_\rho(\bm{x},\bm{z},\bm{\lambda})$ in \eqref{eq:Lrho}, its stationary point $(\bm{x},\bm{z},\bm{\lambda})$ is defined as 
\begin{subequations}\label{appeq:sta}
\begin{align}
&\frac{\partial L_\rho}{\partial \bm{x}}=2\bm{H}^T\bm{Q}\bm{H}\bm{x}+2\bm{H}^T\bm{Q}\bm{c}+\bm{\lambda}+\rho\bm{x}-\rho\bm{z}=\bm{0},\\
&\bm{0}\in \frac{\partial L_\rho}{\partial \bm{z}}=-\bm{\lambda}+\rho\bm{z}-\rho\bm{x}+\partial \iota_{\mathcal{C}}(\bm{z}),\\
&\frac{\partial L_\rho}{\partial \bm{\lambda}}=\bm{x}-\bm{z}=\bm{0}.
\end{align}
\end{subequations}
In the above, $\partial \iota_{\mathcal{C}}(\bm{z})$ denotes the Fr\'{e}chet  subdifferential~\cite{kruger2003frechet}, and by Proposition \ref{prop:subg} below, we have $\partial \iota_{\mathcal{C}}(\bm{z})=\bm{\Lambda}\bm{z}$, where $\bm{\Lambda}=\textrm{Diag}(\bm{\eta})\otimes \bm{I}_2$ and $\bm{\eta}\in\mathbb{R}^M$.
Then, \eqref{appeq:sta} can be simplified as follows:
\begin{equation}\label{appeq:spKKT}
\bm{A}\bm{x}+\bm{b}+\bm{\Lambda}\bm{x}=\bm{0},\ \bm{x}\in\mathcal{C}.
\end{equation}
where $\bm{A}=2\bm{H}^T\bm{Q}\bm{H}$ and $\bm{b}=2\bm{H}^T\bm{Q}\bm{c}$. By \eqref{eq:decomp}, we have $\bm{H}=\bar{\bm{H}}+\Delta\bm{H}$ and $\bm{c}=\bar{\bm{c}}+\Delta\bm{c}$.

Express $\bm{x}$ by $\Delta \bm{\vartheta}$, i.e., $\bm{x}=\bm{x}(\Delta \bm{\vartheta})$. Then the constraint $\bm{x}\in\mathcal{C}$ in~\eqref{appeq:spKKT} can be eliminated and the left-hand side of \eqref{appeq:spKKT} can be regarded as a function with respect to $\Delta\bm{H}$, $\Delta\bm{c}$, $\Delta\bm{\vartheta}$, and $\bm{\eta}$, denoted as $f(\Delta \bm{H},\Delta \bm{c},\Delta\bm{\vartheta},\bm{\eta})$; see~\eqref{appeq:f} further ahead. Thus,~\eqref{appeq:spKKT} is further re-expressed as 
\begin{equation*}
f(\Delta \bm{H},\Delta \bm{c},\Delta\bm{\vartheta},\bm{\eta})=\bm{0}.
\end{equation*}
Then, by Proposition \ref{prop:imp} below, there exist two continuous functions $d_{\vartheta}(\Delta \bm{H},\Delta \bm{c})$ and $d_{\eta}(\Delta \bm{H},\Delta \bm{c})$ in the neighborhood of $(\Delta \bm{H},\Delta \bm{c})=(\bm{0},\bm{0})$ and these two functions are uniquely defined in this neighborhood such that 
$$f(\Delta \bm{H},\Delta \bm{c},\Delta\tilde{\bm{\vartheta}},\tilde{\bm{\eta}})=\bm{0}$$ 
holds with $\Delta\tilde{\bm{\vartheta}}=d_{\vartheta}(\Delta \bm{H},\Delta \bm{c})$ and $\tilde{\bm{\eta}}=d_{\eta}(\Delta \bm{H},\Delta \bm{c})$. The uniqueness of these two functions implies that, for any sufficiently small $(\Delta \bm{H},\Delta \bm{c})$, $(\Delta\tilde{\bm{\vartheta}},\tilde{\bm{\eta}})$ is the only solution (depending on $(\Delta \bm{H},\Delta \bm{c})$) such that \eqref{appeq:spKKT} holds. Hence, by the equivalence between~\eqref{appeq:sta} and~\eqref{appeq:spKKT}, $$(\bar{\bm{x}},\bar{\bm{z}},\bar{\bm{\lambda}})=\left(\bm{x}(\Delta \tilde{\bm{\vartheta}}),\bm{x}(\Delta \tilde{\bm{\vartheta}}),[\textrm{Diag}(\tilde{\bm{\eta}})\otimes \bm{I}_2]\bm{x}(\Delta \tilde{\bm{\vartheta}})\right)$$ 
is the unique stationary point of $L_\rho(\bm{x},\bm{z},\bm{\lambda})$ (for sufficiently small $(\Delta \bm{H},\Delta \bm{c})$). Furthermore, combining with \cite[Corollary 2]{wang2019global}, we know that $(\bar{\bm{x}},\bar{\bm{z}},\bar{\bm{\lambda}})$ is the unique limit point of the sequence $\{\bm{x}^t,\bm{z}^t,\bm{\lambda}^t\}$. By~\cite[Lemma 6]{wang2019global}, $\{\bm{x}^t,\bm{z}^t,\bm{\lambda}^t\}$ is bounded, and hence every subsequence of $\{\bm{x}^t,\bm{z}^t,\bm{\lambda}^t\}$ converges $(\bar{\bm{x}},\bar{\bm{z}},\bar{\bm{\lambda}}).$ Otherwise, there exists some subsequence of $\{\bm{x}^t,\bm{z}^t,\bm{\lambda}^t\}$ converging to a point which is different from $(\bar{\bm{x}},\bar{\bm{z}},\bar{\bm{\lambda}})$ and this contradicts the fact that $(\bar{\bm{x}},\bar{\bm{z}},\bar{\bm{\lambda}})$ is the unique limit point. As such, the entire sequence $\{ \bm{x}^t,\bm{z}^t,\bm{\lambda}^t\}$ converges to $(\bar{\bm{x}},\bar{\bm{z}},\bar{\bm{\lambda}})$.

Finally, if $\Delta\bm{H}=\bm{0}$ and $\Delta \bm{c}=\bm{0}$, $f(\bm{0},\bm{0},\Delta\bar{\bm{\vartheta}},\bm{0})$ corresponds to the gradient of $\| \bar{\bm{H}} \bm{x} + \bar{\bm{c}}\|_{\bm{Q}}^2$ with respect to $\bm{x}$ at point $\bm{x}(\Delta \bar{\bm{\vartheta}})$. Then, by Proposition~\ref{prop:delta}, it is straightforward to check that $f(\bm{0},\bm{0},\Delta\bar{\bm{\vartheta}},\bm{0})=\bm{0}$, where $\Delta\bar{\bm{\vartheta}}$ corresponds to the true bias. Therefore, the unique stationary point $\bar{\bm{x}}$ is $\bm{x}(\Delta\bar{\bm{\vartheta}})$. This completes the proof the second statement of Theorem 1.

\begin{prop}\label{prop:subg}
For the indicator function $\iota_{\mathcal{C}}(\cdot)$ defined in \eqref{eq:indfunc}, its Fr\'{e}chet subdifferential at point $\bm{z}$ is 
$$
\partial \iota_{\mathcal{C}}(\bm{z})=\left\{  \bm{\Lambda}\bm{z} \mid \bm{\Lambda}=\textrm{Diag}(\bm{\eta})\otimes \bm{I}_2, \bm{\eta}\in\mathbb{R}^M \right\}.$$
\end{prop}\begin{proof}
	We first reformulate the indicator function  $\iota_{\mathcal{C}}(\cdot)$ as $\iota_{\mathcal{C}}(\bm{z})=\sum_{m=1}^M\iota(\bm{z}_{2m-1},\bm{z}_{2m}),$ where 
	\begin{equation}\label{eq:app_iota}
	\iota(\bm{u})=\left\lbrace
	\begin{aligned}
		0,\ &\textrm{if } \bm{u}\in\mathcal{U}\triangleq\{\bm{u}\in\mathbb{R}^2\mid   u_1^2+u_2^2=1 \},\\
		\infty, \ &\textrm{otherwise}.
	\end{aligned}
	\right.
	\end{equation}
Then, the Fr\'{e}chet subdifferential $\partial \iota(\cdot)$ of  $\iota(\cdot)$ at point $\bm{u}\in\mathbb{R}^{2}$ is the Fr\'{e}chet normal cone of set $\mathcal{U}$~\cite[Proposition 1.18]{kruger2003frechet}, which is defined as 
\begin{align*}
&C\iota(\bm{u})=
\left\{\bm{g}\in\mathbb{R}^{2}\mid\liminf_{\bm{u}^\prime\stackrel{\mathcal{U}}\rightarrow\bm{u}}\frac{-\bm{g}^T(\bm{u}^\prime-\bm{u})}{\|\bm{u}^\prime-\bm{u} \|}\geq 0\right\},
\end{align*}	
where the notation $\bm{u}^\prime\stackrel{\mathcal{U}}\rightarrow\bm{u}$ represents $\bm{u}^\prime\rightarrow\bm{u}$ with $\bm{u},\bm{u}^\prime\in \mathcal{U}$. By expressing $\bm{u}=(\cos\theta,\sin\theta)$ and $\bm{u}^\prime=(\cos\theta^\prime,\sin\theta^\prime)$ with $\theta,\theta^\prime\in[-\pi,\pi]$, we have 
    \begin{equation*}
    \begin{aligned}
        \lim_{\bm{u}^\prime\stackrel{\mathcal{U}}\rightarrow\bm{u}}\frac{\bm{u}^\prime-\bm{u}}{\|\bm{u}^\prime-\bm{u} \|}&=\lim_{\theta^\prime\rightarrow\theta}\frac{(\cos\theta^\prime-\cos\theta,\sin\theta^\prime-\sin\theta)}{\sqrt{(\cos\theta^\prime-\cos\theta)^2+(\sin\theta^\prime-\sin\theta)^2}}\\
        &\overset{(i)}{=}\lim_{\theta^\prime\rightarrow\theta}\frac{(\cos\theta^\prime,\sin\theta^\prime)-(\cos\theta,\sin\theta)}{\sqrt{2-2\cos(\theta^\prime-\theta)}}\\
        &\overset{(ii)}{=}\left\{ 
        \begin{aligned}
         \lim_{\theta^\prime\rightarrow\theta^+}\frac{(\cos\theta^\prime-\cos\theta,\sin\theta^\prime-\sin\theta)}{2\sin(\frac{\theta^\prime-\theta}{2})}\\
         \lim_{\theta^\prime\rightarrow\theta^-}\frac{(\cos\theta^\prime-\cos\theta,\sin\theta^\prime-\sin\theta)}{-2\sin(\frac{\theta^\prime-\theta}{2})}
        \end{aligned}\right.\\
        &\overset{(iii)}{=}(\mp\sin\theta,\pm\cos\theta)=\pm\begin{bmatrix} 0 & -1 \\ 1 & 0 \end{bmatrix}\bm{u},
    \end{aligned}
\end{equation*}
where $\theta^\prime\rightarrow\theta^+$ (or $\theta^-$) denotes that $\theta^\prime$ approaches $\theta$ from $\theta^\prime>\theta$ (or $\theta^\prime<\theta$). In the above, (i) is due to $\cos(\alpha-\beta)=\cos\alpha\cos\beta+\sin\alpha\sin\beta$, (ii) is by $\sqrt{2}\sin\frac{\alpha}{2}=\pm\sqrt{1-\cos \alpha}$, and (iii) is by the L'H\"{o}pital's rule. Hence, 
\begin{align*}
	&\liminf_{\bm{u}^\prime\stackrel{\mathcal{U}}\rightarrow\bm{u}}\frac{-\bm{g}^T(\bm{u}^\prime-\bm{u})}{\|\bm{u}^\prime-\bm{u} \|}\geq 0\Leftrightarrow\lim_{\bm{u}^\prime\stackrel{\mathcal{U}}\rightarrow\bm{u}}\frac{-\bm{g}^T(\bm{u}^\prime-\bm{u})}{\|\bm{u}^\prime-\bm{u} \|}= 0\\
	&\Leftrightarrow\bm{g}^T\begin{bmatrix} 0 & -1 \\ 1 & 0 \end{bmatrix}\bm{u}= 0\Leftrightarrow \bm{g}=\eta \bm{u},\eta\in\mathbb{R},
\end{align*}
and $\partial\iota(\bm{u})=\{\eta\bm{u}\in\mathbb{R}^2\mid\eta\in\mathbb{R}\}.$

For $\iota_{\mathcal{C}}(\bm{z})=\sum_{m=1}^M\iota(\bm{z}_{2m-1},\bm{z}_{2m})$, denoting 
$$\bm{u}_m=[\bm{z}_{2m-1},\bm{z}_{2m}]^T,m=1,2,\ldots,M,$$
we have 
\begin{align*}
\partial\iota_{\mathcal{C}}(\bm{z})=&(\partial\iota(\bm{u}_1),\partial\iota(\bm{u}_2),\ldots,\partial\iota(\bm{u}_M))\\
=&\left\{  \bm{\Lambda}\bm{z} \mid \bm{\Lambda}=\textrm{Diag}(\bm{\eta})\otimes \bm{I}_2, \bm{\eta}\in\mathbb{R}^M \right\},
\end{align*}
where $\otimes$ is the Kronecker product.
\end{proof}

\begin{prop}\label{prop:imp}
	 Express $\bm{x}$ by $\Delta \bm{\vartheta}$, i.e., $\bm{x}=\bm{x}(\Delta \bm{\vartheta})$, then $\eqref{appeq:spKKT}$ can be expressed as $f(\Delta \bm{H},\Delta \bm{c},\Delta\bm{\vartheta},\bm{\eta})=\bm{0}$, where
	\begin{equation}\label{appeq:f}
	\begin{aligned}
	f(\Delta \bm{H},\Delta \bm{c},\Delta\bm{\vartheta},\bm{\eta})
	=&2(\bar{\bm{H}}+\Delta\bm{H})^T\bm{Q}(\bar{\bm{H}}+\Delta\bm{H})\bm{x}(\Delta \bm{\vartheta})\\
	&+2(\bar{\bm{H}}+\Delta\bm{H})^T\bm{Q}(\bar{\bm{c}}+\Delta\bm{c})\\
	&+[\textrm{Diag}(\bm{\eta})\otimes \bm{I}_2]\bm{x}(\Delta \bm{\vartheta}).
	\end{aligned}
	\end{equation}
	Further, there exist two continuous functions $d_{\vartheta}(\Delta \bm{H},\Delta \bm{c})$ and $d_{\eta}(\Delta \bm{H},\Delta \bm{c})$ in the neighborhood of $(\Delta \bm{H},\Delta \bm{c})=(\bm{0},\bm{0})$ and these two functions are uniquely defined in this neighborhood such that $f(\Delta \bm{H},\Delta \bm{c},d_{\vartheta}(\Delta \bm{H},\Delta \bm{c}),d_{\eta}(\Delta \bm{H},\Delta \bm{c}))=\bm{0}$. 
\end{prop}

\begin{proof}
By \eqref{eq:decomp}, we have $\bm{H}=\bar{\bm{H}}+\Delta\bm{H}$ and $\bm{c}=\bar{\bm{c}}+\Delta\bm{c}$. From Proposition \ref{prop:delta} below, we know that \eqref{appeq:spKKT} holds for $\Delta\bm{H}=\bm{0},\Delta\bm{c}=\bm{0},\bm{\eta}=\bm{0}$, and $\bm{x}=\bm{x}(\Delta\bar{\bm{\vartheta}})$. This is equivalent to $f(\bm{0},\bm{0},\Delta\bar{\bm{\vartheta}},\bm{0})=\bm{0}$. Moreover, by \cite[Theorem 9.3] {loomis1968advanced}, if the Jacobian matrix $\bm{D}_{\Delta\bm{\vartheta},\bm{\eta}}$ of $f(\Delta \bm{H},\Delta \bm{c},\Delta\bm{\vartheta},\bm{\eta})$ with respect to $\Delta\bm{\vartheta}$ and $\bm{\eta}$ is invertible at point $(\Delta \bm{H},\Delta \bm{c},\Delta\bm{\vartheta},\bm{\eta})=(\bm{0},\bm{0},\Delta\bar{\bm{\vartheta}},\bm{0})$, then Proposition \ref{prop:imp} holds. Next, we show that the Jacobian matrix $\bm{D}_{\Delta\bm{\vartheta},\bm{\eta}}$ at $(\bm{0},\bm{0},\Delta\bar{\bm{\vartheta}},\bm{0})$ is indeed invertible. 
	
By \eqref{appeq:f}, $\bm{D}_{\Delta\bm{\vartheta},\bm{\eta}}$ is given as 
\begin{equation*}
    \bm{D}_{\Delta\bm{\vartheta},\bm{\eta}}=\left[ \frac{\partial f}{\partial \bm{\eta}}\ \frac{\partial f}{\partial \Delta\bm{\vartheta}}\right]=\left[  \bm{S}\ \bm{G}\bm{C}\right]=\begin{bmatrix}  \bm{I}_{2M}&\bm{G}\end{bmatrix}\begin{bmatrix} \bm{S}&\bm{0}\\ \bm{0} & \bm{C} \end{bmatrix},
\end{equation*}
where $$\bm{G}=\bm{H}^T\bm{Q}\bm{H}+[\textrm{Diag}(\bm{\eta})\otimes \bm{I}_2],$$
	$$ \bm{S}=\textrm{blkdiag}(\bm{s}_1,\bm{s}_2,\ldots,\bm{s}_M),\bm{s}_m=[\cos \Delta{\vartheta}_m,\sin \Delta{\vartheta}_m]^T,$$
	$$ \bm{C}=\textrm{blkdiag}(\bm{c}_1,\bm{c}_2,\ldots,\bm{c}_M),\bm{c}_m=[-\sin \Delta{\vartheta}_m,\cos \Delta{\vartheta}_m]^T.$$
	Hence, at point $(\Delta \bm{H},\Delta \bm{c},\Delta\bm{\vartheta},\bm{\eta})=(\bm{0},\bm{0},\Delta\bar{\bm{\vartheta}},\bm{0})$, we have 
	$$\textrm{rank}\left(\begin{bmatrix}  \bm{I}_{2M}&\bm{G}\end{bmatrix}\right)=2M, \ \textrm{rank}\left(\begin{bmatrix} \bm{S}&\bm{0}\\ \bm{0} & \bm{C} \end{bmatrix}\right)=2M,$$
which, together with Sylvester's rank inequality \cite{Horn1985matrix}, immediately implies $\textrm{rank}(\bm{D}_{\Delta\bm{\vartheta},\bm{\eta}})=2M$ and thus  $\bm{D}_{\Delta\bm{\vartheta},\bm{\eta}}$ is invertible. This completes the proof.
\end{proof}

\begin{prop}\label{prop:delta}
If $\Delta\bm{H}_\vartheta=\bm{0}$ and $\Delta\bm{c}_\vartheta=\bm{0}$, then we have ${\bm{H}}_\vartheta {\bm{x}}(\Delta\bar{\bm{\vartheta}})+{\bm{c}}_\vartheta=\bm{0}$, where $\Delta\bar{\bm{\vartheta}}$ is the true value of the corresponding bias.
\end{prop}
\begin{proof}
According to the definitions of $\Delta\bm{H}_\vartheta$ and $\Delta\bm{c}_\vartheta$ and the derivations in Appendix~\ref{app:deriv}, we know that $\Delta\bm{H}_\vartheta=\bm{0}$ and $\Delta\bm{c}_\vartheta=\bm{0}$ hold when $\bm{w}_k=\bm{0},\bm{n}_k=\dot{\bm{n}}_k=\bm{0}$, for all $ k$, $\lambda_\eta=\lambda_\phi=1$, and all the other biases $\Omega\setminus \Delta\bm{\vartheta}$ and velocity $\bm{v}$ take the true values. In this case, the constructed measurement equations in~\eqref{eq:bias_mea_model} become
\begin{equation*}
\begin{aligned}
\bm{0}&=g_{k+1}(  \boldsymbol{\theta}_{s_{k+1}}) - 
g_{k}( \boldsymbol{\theta}_{s_k}) - T_k\dot{\bm{\xi}}_k,\ \forall\,k,\\
\bm{0}&= \dot{\bm{\xi}}_{k+1} - \dot{\bm{\xi}}_k,\  \forall\,k,
\end{aligned}
\end{equation*}
where $\boldsymbol{\theta}_{m},m=1,2,\ldots, M$, take the true bias values. Since problem~\eqref{eq:qcqp} is an equivalent reformulation of problem~\eqref{eq:opt_nls} with respect to one type of bias $\Delta \bm{\vartheta}$, then the above equation can also be reformulated with respect to $\Delta \bm{\vartheta}$ as below 
$${\bm{H}}_\vartheta {\bm{x}}(\Delta\bar{\bm{\vartheta}})+{\bm{c}}_\vartheta=\bm{0},$$
where $\Delta\bar{\bm{\vartheta}}$ corresponds to the true bias value. This completes the proof.
\end{proof}

\bibliographystyle{IEEEtran}
\bibliography{refs}

\vfill\pagebreak
\newpage
\clearpage
\begin{center} 
{\large\textbf{Supplementary Material}}
\end{center}
\normalsize
\section{Derivations of Subproblems \eqref{eq:sub_xi}--\eqref{eq:sub_gamma}}\label{app:deriv}
Denote 
$\bm{Q}=\textrm{blkdiag}(\bm{Q}_1,\bm{Q}_2,\ldots,\bm{Q}_{K}),$ where $\textrm{blkdiag}(\cdot)$ is  the diagonalizing operation for stacking matrices $\bm{Q}_1,\ldots,\bm{Q}_{K}$ as a block diagonal matrix. Also, for $\Delta{\bm{\vartheta}}\in\mathbb{R}^M$, define vector $\bm{x}(\Delta{\bm{\vartheta}})= [\cos\Delta {\vartheta}_1, \sin\Delta {\vartheta}_1 ,\dots ,\cos\Delta {\vartheta}_M, \sin\Delta {\vartheta}_M]^T$.

\subsection{Derivation for Subproblem \eqref{eq:sub_xi}}
With all biases being fixed, problem \eqref{eq:opt_nls} with respect to velocities is reformulated as follows:
\begin{equation}\label{eq:app_ref_v}
\min \limits_{\{  \dot{\bm{\xi}}_k \}}
\sum_{k=1}^{K-1}
\left\|  \bm{c}_{v,k} +\bm{H}_{v,k}\begin{bmatrix}  \dot{\bm{\xi}}_k \\ \dot{\bm{\xi}}_{k+1}\end{bmatrix}\right\|_{\bm{Q}_k}^2,
\end{equation}
where 
$$\bm{c}_{v,k}=\begin{bmatrix} g_{k+1}(  \boldsymbol{\theta}_{s_{k+1}}) - g_{k}( \boldsymbol{\theta}_{s_k}) \\ \bm{0} \end{bmatrix},\ \bm{H}_{v,k}=\begin{bmatrix}-T_k\bm{I}_3 & \bm{0} \\ -\bm{I}_3 & \bm{I}_3 \end{bmatrix}.$$ Compactly, defining $\bm{v}=[\dot{\bm{\xi}}_1,\dot{\bm{\xi}}_2,\ldots,\dot{\bm{\xi}}_{K}]^T$, we then have an equivalent form for \eqref{eq:app_ref_v} as follows: 
 \begin{equation}
 \min \limits_{\bm{v}}
 \| \bm{H}_v\bm{v} + \bm{c}_{v}\|_{\bm{Q}}^2,
 \end{equation}
 where $\bm{c}_{v}=[\bm{c}_{v,1}^T,\bm{c}_{v,2}^T,\ldots,\bm{c}_{v,K-1}^T]^T$ and $\bm{H}_{v}$ is a matrix with 
 $[\bm{H}_{v}]_{6k-5:6k,3k-2:3k+3}=\bm{H}_{v,k},\ k=1,2,\ldots,K-1$, and all the other elements being zeros.

\subsection{Derivation for Subproblem \eqref{eq:sub_rho}}\label{app:sub_rho}
With velocities and all the other biases being fixed (except $\Delta\bm{\rho}$), problem \eqref{eq:opt_nls} with respect to $\Delta\bm{\rho}$ is reformulated as follows: 
\begin{equation}\label{eq:app_ref_rho}
\begin{aligned}
\min_{ \Delta\bm{\rho}}\ \sum_{k=1}^{K-1}\| &\bm{h}_{\rho,k+1}\Delta\rho_{s_{k+1}}-\bm{h}_{\rho,k}\Delta\rho_{s_{k}}\\
&+\bm{c}_{\rho,{k+1}}-\bm{c}_{\rho,k} +\bm{v}_{\rho,k} \|_{\bm{Q}_k}^2,
\end{aligned}
\end{equation}
where
\begin{align*}
\scriptsize 
\bm{h}_{\rho,k}=&[\bar{\bm{h}}_{\rho,k}^T,\bm{0}^T]^T,\ \bm{c}_{\rho,k}=[\bar{\bm{c}}_{\rho,k}^T,\bm{0}^T]^T,\\
\bm{v}_{\rho,k}=&[\bm{p}_{s_{k+1}}^T- \bm{p}_{s_{k}}^T-T_k\dot{\bm{\xi}}_k^T,\dot{\bm{\xi}}_{k+1}^T-\dot{\bm{\xi}}_k^T]^T,\\
\bar{\bm{h}}_{\rho,k}=R(\bm{\zeta}_{s_k}+&\Delta \bm{\zeta}_{s_k})\left[\begin{matrix}
\lambda_\phi^{-1} \lambda_\eta^{-1}  \cos\phi_k \cos(\eta_k+\Delta\eta_{s_k})    \\
\lambda_\phi^{-1} \lambda_\eta^{-1} \sin\phi_k \cos(\eta_k+\Delta\eta_{s_k})  \\
\lambda_\eta^{-1}   \sin(\eta_k+\Delta\eta_{s_k}) 
\end{matrix}\right],\\
\bar{\bm{c}}_{\rho,k}=R(\bm{\zeta}_{s_k}+&\Delta \bm{\zeta}_{s_k})\left[\begin{matrix}
\lambda_\phi^{-1} \lambda_\eta^{-1}  \rho_k\cos\phi_k \cos(\eta_k+\Delta\eta_{s_k})    \\
\lambda_\phi^{-1} \lambda_\eta^{-1} \rho_k\sin\phi_k \cos(\eta_k+\Delta\eta_{s_k})  \\
\lambda_\eta^{-1}  \rho_k  \sin(\eta_k+\Delta\eta_{s_k}) 
\end{matrix}\right].
\end{align*}
Compactly, \eqref{eq:app_ref_rho} can be equivalently formulated as 
\begin{equation}
\min \limits_{\Delta\bm{\rho}}\  \| \bm{H}_{\rho} \Delta\bm{\rho}
+ \bm{ c}_{{\rho}}\|_{\bm{Q}}^2,
\end{equation}
where $\bm{H}_{\rho}$ is a zero matrix except
$$
[\bm{H}_{\rho}]_{6k-5:6k,m}=\left\{ 
\begin{aligned}
&\bm{h}_{\rho,k+1},\ \textrm{if }m=s_{k+1},\\
&-\bm{h}_{\rho,k},\ \textrm{if }m=s_{k},
\end{aligned}\right.
$$
and $\bm{c}_\rho=[\tilde{\bm{c}}_{\rho,1}^T,\ldots,\tilde{\bm{c}}_{\rho,{K-1}}^T]^T$ with $\tilde{\bm{c}}_{\rho,k}=\bm{c}_{\rho,{k+1}}-\bm{c}_{\rho,k} + \bm{v}_{\rho,k} $.

\subsection{Derivation for Subproblems \eqref{eq:sub_eta}--\eqref{eq:sub_gamma}}\label{app:subdev}
For notational simplicity, we firstly define $\bm{u}_k$ as
$$
\bm{u}_k\triangleq\left[\begin{matrix}
\lambda_\phi^{-1} \lambda_\eta^{-1}  (\rho_k+\Delta\rho_{s_k})\cos\phi_k \cos(\eta_k+\Delta\eta_{s_k})    \\
\lambda_\phi^{-1} \lambda_\eta^{-1} (\rho_k+\Delta\rho_{s_k})\sin\phi_k \cos(\eta_k+\Delta\eta_{s_k})  \\
\lambda_\eta^{-1}  (\rho_k+\Delta\rho_{s_k})  \sin(\eta_k+\Delta\eta_{s_k}) 
\end{matrix}\right],
$$
which will be used in this subsection.
\subsubsection{Subproblem \eqref{eq:sub_eta}}
With velocities and all other kinds of biases being fixed (except $\Delta\bm{\eta}$), problem \eqref{eq:opt_nls} with respect to  $\Delta\bm{\eta}$ is reformulated as follows: 
\begin{equation}\label{eq:app_ref_eta}
\begin{aligned}
\min_{ \Delta\bm{\eta}}\ \sum_{k=1}^{K-1}\| &\bm{h}_{\eta,k+1}^c\cos\Delta\eta_{s_{k+1}}+ \bm{h}_{\eta,k+1}^s\sin\Delta\eta_{s_{k+1}}-\\
&\bm{h}_{\eta,k}^c\cos\Delta\eta_{s_{k}} -\bm{h}_{\eta,k}^s\sin\Delta\eta_{s_{k}} +\bm{c}_{\eta,k}\|_{\bm{Q}_k}^2,
\end{aligned}
\end{equation}
where
$$\bm{h}_{\eta,k}^c=[(\bar{\bm{h}}_{\eta,k}^{c})^T,\bm{0}^T]^T, \ \bm{h}_{\eta,k}^s=[(\bar{\bm{h}}_{\eta,k}^{s})^T,\bm{0}^T]^T,\ \bm{c}_{\eta,k}=\bm{v}_{\rho,k},$$
and 
\begin{align*}
\scriptsize
\bar{\bm{h}}_{\eta,k}^c=&R(\bm{\zeta}_{s_k}+\Delta \bm{\zeta}_{s_k})\left[\begin{matrix}
\lambda_\phi^{-1} \lambda_\eta^{-1}  (\rho_k+\Delta\rho_{s_k})\cos\phi_k \cos\eta_k    \\
\lambda_\phi^{-1} \lambda_\eta^{-1} (\rho_k+\Delta\rho_{s_k})\sin\phi_k \cos\eta_k  \\
\lambda_\eta^{-1}  (\rho_k+\Delta\rho_{s_k})  \sin\eta_k 
\end{matrix}\right],\\
\bar{\bm{h}}_{\eta,k}^s=&R(\bm{\zeta}_{s_k}+\Delta \bm{\zeta}_{s_k})\left[\begin{matrix}
-\lambda_\phi^{-1} \lambda_\eta^{-1}  (\rho_k+\Delta\rho_{s_k})\cos\phi_k \sin\eta_k    \\
-\lambda_\phi^{-1} \lambda_\eta^{-1} (\rho_k+\Delta\rho_{s_k})\sin\phi_k \sin\eta_k  \\
\lambda_\eta^{-1}  (\rho_k+\Delta\rho_{s_k})  \cos\eta_k
\end{matrix}\right].
\end{align*}
Letting $\bm{x}=\bm{x}(\Delta\bm{\eta})$, \eqref{eq:app_ref_eta} can be equivalently reformulated as 
\begin{equation*}
\begin{aligned}
\min_{ \bm{x} }\  \|\bm{H}_{\eta} \bm{x} + \bm{ c}_{\eta}\|_{\bm{Q}}^2,\quad
\textrm{s.t.}\ \bm{x} \in \mathcal{C},
\end{aligned} 
\end{equation*}
where $\bm{H}_{\eta}$ is a zero matrix except
$$
[\bm{H}_{\eta}]_{6k-5:6k,m}=\left\{ 
\begin{aligned}
&\bm{h}_{\eta,k+1}^c,\ \textrm{if }m=2s_{k+1}-1,\\
&\bm{h}_{\eta,k+1}^s,\ \textrm{if }m=2s_{k+1},\\
&-\bm{h}_{\eta,k}^c,\ \textrm{if }m=2s_{k}-1,\\
&-\bm{h}_{\eta,k}^s,\ \textrm{if }m=2s_{k},
\end{aligned}\right.
$$
and $\bm{c}_\eta=[{\bm{c}}_{\eta,1}^T,{\bm{c}}_{\eta,2}^T,\ldots,{\bm{c}}_{\eta,{K-1}}^T]^T$.

\subsubsection{Subproblem \eqref{eq:sub_alp}}
With velocities  and all other kinds of biases being fixed (except $\Delta\bm{\alpha}$), problem \eqref{eq:opt_nls} with respect to  $\Delta\bm{\alpha}$ is reformulated as follows: 
\begin{equation}\label{eq:app_ref_alp}
\begin{aligned}
\min_{ \Delta\bm{\alpha}}\ \sum_{k=1}^{K-1}&\| \bm{h}_{\alpha,k+1}^c\cos\Delta\alpha_{s_{k+1}}+ \bm{h}_{\alpha,k+1}^s\sin\Delta\alpha_{s_{k+1}}-\\
&\bm{h}_{\alpha,k}^c\cos\Delta\alpha_{s_{k}} -\bm{h}_{\alpha,k}^s\sin\Delta\alpha_{s_{k}} +\bm{c}_{\alpha,k}\|_{\bm{Q}_k}^2,
\end{aligned}
\end{equation}
where
$$\bm{h}_{\alpha,k}^c=[(\bar{\bm{h}}_{\alpha,k}^{c})^T,\bm{0}^T]^T, \ \bm{h}_{\alpha,k}^s=[(\bar{\bm{h}}_{\alpha,k}^{s})^T,\bm{0}^T]^T,$$
{ and $\bm{c}_{\alpha,k}=\bm{v}_{\rho,k}+[\bar{\bm{c}}_{\alpha,k}^T,\bm{0}^T]^T$. Define notations 
$$
E_1=\begin{bmatrix}0&0&0\\0&1&0\\0&0&1 \end{bmatrix},\ E_2=\begin{bmatrix}0&0&0\\0&0&-1\\0&1&0 \end{bmatrix},\ E_3=\begin{bmatrix}1&0&0\\0&0&0\\0&0&0 \end{bmatrix},
$$}
{then $\bar{\bm{h}}_{\alpha,k}^c$, $\bar{\bm{h}}_{\alpha,k}^s$, and $ \bar{\bm{c}}_{\alpha,k}$ are expressed as}
\begin{align*}
\bar{\bm{h}}_{\alpha,k}^c= E_1 R_x(\alpha_{s_{k}})R_y(\beta_{s_k}+\Delta \beta_{s_k}) R_z(\gamma_{s_k}+\Delta \gamma_{s_k})\bm{u}_k,\\
\bar{\bm{h}}_{\alpha,k}^s=E_2 R_x(\alpha_{s_{k}})R_y(\beta_{s_k}+\Delta \beta_{s_k})R_z(\gamma_{s_k}+\Delta \gamma_{s_k})\bm{u}_k,\\
{\bar{\bm{c}}_{\alpha,k}}=E_3 R_x(\alpha_{s_{k}})R_y(\beta_{s_k}+\Delta \beta_{s_k})R_z(\gamma_{s_k}+\Delta \gamma_{s_k})\bm{u}_k.
\end{align*}
Letting $\bm{x}=\bm{x}(\Delta\bm{\alpha})$, \eqref{eq:app_ref_alp} can be equivalently reformulated as 
\begin{equation*}
\begin{aligned}
\min_{ \bm{x} }\  \|\bm{H}_{\alpha} \bm{x} + \bm{ c}_{\alpha}\|_{\bm{Q}}^2,\quad
\textrm{s.t.}\ \bm{x} \in \mathcal{C},
\end{aligned} 
\end{equation*}
where $\bm{H}_{\alpha}$ is a zero matrix except
$$
[\bm{H}_{\alpha}]_{6k-5:6k,m}=\left\{ 
\begin{aligned}
&\bm{h}_{\alpha,k+1}^c,\ \textrm{if }m=2s_{k+1}-1,\\
&\bm{h}_{\alpha,k+1}^s,\ \textrm{if }m=2s_{k+1},\\
&-\bm{h}_{\alpha,k}^c,\ \textrm{if }m=2s_{k}-1,\\
&-\bm{h}_{\alpha,k}^s,\ \textrm{if }m=2s_{k},
\end{aligned}\right.
$$
and $\bm{c}_\alpha=[{\bm{c}}_{\alpha,1}^T,{\bm{c}}_{\alpha,2}^T,\ldots,{\bm{c}}_{\alpha,{K-1}}^T]^T$.
\subsubsection{Subproblem \eqref{eq:sub_beta}}
With velocities and all other kinds of biases being fixed (except $\Delta\bm{\beta}$), problem \eqref{eq:opt_nls} with respect to $\Delta\bm{\beta}$ is reformulated as follows: 
\begin{equation}\label{eq:app_ref_beta}
\begin{aligned}
\min_{ \Delta\bm{\beta}}\ \sum_{k=1}^{K-1}&\| \bm{h}_{\beta,k+1}^c\cos\Delta\beta_{s_{k+1}}+ \bm{h}_{\beta,k+1}^s\sin\Delta\beta_{s_{k+1}}-\\
&\bm{h}_{\beta,k}^c\cos\Delta\beta_{s_{k}} -\bm{h}_{\beta,k}^s\sin\Delta\beta_{s_{k}} +\bm{c}_{\beta,k}\|_{\bm{Q}_k}^2,
\end{aligned}
\end{equation}
where
$$\bm{h}_{\beta,k}^c=[(\bar{\bm{h}}_{\beta,k}^{c})^T,\bm{0}^T]^T,\ \bm{h}_{\beta,k}^s=[(\bar{\bm{h}}_{\beta,k}^{s})^T,\bm{0}^T]^T,$$
{and $\bm{c}_{\beta,k}=\bm{v}_{\rho,k}+[\bar{\bm{c}}_{\beta,k}^T,\bm{0}^T]^T$. Define notations 
$$
E_1=\begin{bmatrix}1&0&0\\0&0&0\\0&0&1 \end{bmatrix},\ E_2=\begin{bmatrix}0&0&1\\0&0&0\\-1&0&0 \end{bmatrix},\ E_3=\begin{bmatrix}0&0&0\\0&1&0\\0&0&0 \end{bmatrix} ,
$$}
{then $\bar{\bm{h}}_{\beta,k}^c$, $\bar{\bm{h}}_{\beta,k}^s$, and $ \bar{\bm{c}}_{\beta,k}$ are expressed as}
\begin{align*}
\bm{h}_{\beta,k}^c=&R_x(\alpha_{s_k}+\Delta \alpha_{s_k})E_1 R_y(\beta_{s_k})R_z(\gamma_{s_k}+\Delta \gamma_{s_k})\bm{u}_k,\\
\bm{h}_{\beta,k}^s=&R_x(\alpha_{s_k}+\Delta \alpha_{s_k})E_2 R_y(\beta_{s_k})R_z(\gamma_{s_k}+\Delta \gamma_{s_k})\bm{u}_k,\\
{\bar{\bm{c}}_{\beta,k}}=&R_x(\alpha_{s_k}+\Delta \alpha_{s_k})E_3R_y(\beta_{s_k}) R_z(\gamma_{s_k}+\Delta \gamma_{s_k})\bm{u}_k.
\end{align*}
Letting $\bm{x}=\bm{x}(\Delta\bm{\beta})$, \eqref{eq:app_ref_beta} can be equivalently reformulated as 
\begin{equation*}
\begin{aligned}
\min_{ \bm{x} }\  \|\bm{H}_{\beta} \bm{x} + \bm{ c}_{\beta}\|_{\bm{Q}}^2,\quad
\textrm{s.t.}\ \bm{x} \in \mathcal{C},
\end{aligned} 
\end{equation*}
where $\bm{H}_{\beta}$ is a zero matrix except
$$
[\bm{H}_{\beta}]_{6k-5:6k,m}=\left\{ 
\begin{aligned}
&\bm{h}_{\beta,k+1}^c,\ \textrm{if }m=2s_{k+1}-1,\\
&\bm{h}_{\beta,k+1}^s,\ \textrm{if }m=2s_{k+1},\\
&-\bm{h}_{\beta,k}^c,\ \textrm{if }m=2s_{k}-1,\\
&-\bm{h}_{\beta,k}^s,\ \textrm{if }m=2s_{k},
\end{aligned}\right.
$$
and $\bm{c}_\beta=[{\bm{c}}_{\beta,1}^T,{\bm{c}}_{\beta,2}^T,\ldots,{\bm{c}}_{\beta,{K-1}}^T]^T$.

\subsubsection{Subproblem \eqref{eq:sub_gamma}}
With velocities all other kinds of biases being fixed (except $\Delta\bm{\gamma}$), problem \eqref{eq:opt_nls} with respect to $\Delta\bm{\gamma}$ is reformulated as follows: 
\begin{equation}\label{eq:app_ref_gamma}
\begin{aligned}
\min_{ \Delta\bm{\gamma}}\ \sum_{k=1}^{K-1}&\| \bm{h}_{\gamma,k+1}^c\cos\Delta\gamma_{s_{k+1}}+ \bm{h}_{\gamma,k+1}^s\sin\Delta\gamma_{s_{k+1}}-\\
&\bm{h}_{\gamma,k}^c\cos\Delta\gamma_{s_{k}} -\bm{h}_{\gamma,k}^s\sin\Delta\gamma_{s_{k}} +\bm{c}_{\gamma,k} \|_{\bm{Q}_k}^2,
\end{aligned}
\end{equation}
where
$$\bm{h}_{\gamma,k}^c=[(\bar{\bm{h}}_{\gamma,k}^{c})^T,\bm{0}^T]^T,\ \bm{h}_{\gamma,k}^s=[(\bar{\bm{h}}_{\gamma,k}^{s})^T,\bm{0}^T]^T,$$
{ and $\bm{c}_{\gamma,k}=\bm{v}_{\rho,k}+[\bar{\bm{c}}_{\gamma,k}^T,\bm{0}^T]^T$. Define notations 
$$
E_1=\begin{bmatrix}1&0&0\\0&1&0\\0&0&0 \end{bmatrix},\ E_2=\begin{bmatrix}0&-1&0\\1&0&0\\0&0&0 \end{bmatrix},\ E_3=\begin{bmatrix}0&0&0\\0&0&0\\0&0&1 \end{bmatrix},
$$}
{ then $\bar{\bm{h}}_{\gamma,k}^c$, $\bar{\bm{h}}_{\gamma,k}^s$, and $ \bar{\bm{c}}_{\gamma,k}$ are expressed as}
\begin{align*}
\bar{\bm{h}}_{\gamma,k}^c=&R_x(\alpha_{s_k}+\Delta \alpha_{s_k}) R_y(\beta_{s_k}+\Delta \beta_{s_k}) E_1R_z(\gamma_{s_k})\bm{u}_k,\\
\bar{\bm{h}}_{\gamma,k}^s=&R_x(\alpha_{s_k}+\Delta \alpha_{s_k}) R_y(\beta_{s_k}+\Delta \beta_{s_k}) E_2R_z(\gamma_{s_k})\bm{u}_k,\\
{\bar{\bm{c}}_{\gamma,k}}=&R_x(\alpha_{s_k}+\Delta \alpha_{s_k}) R_y(\beta_{s_k}+\Delta \beta_{s_k}) E_3R_z(\gamma_{s_k})\bm{u}_k.
\end{align*}
Letting $\bm{x}=\bm{x}(\Delta\bm{\gamma})$, \eqref{eq:app_ref_gamma} can be equivalently reformulated as 
\begin{equation*}
\begin{aligned}
\min_{ \bm{x} }\  \|\bm{H}_{\gamma} \bm{x} + \bm{ c}_{\gamma}\|_{\bm{Q}}^2,\quad
\textrm{s.t.}\ \bm{x} \in \mathcal{C},
\end{aligned} 
\end{equation*}
where $\bm{H}_{\gamma}$ is a zero matrix except
$$
[\bm{H}_{\gamma}]_{6k-5:6k,m}=\left\{ 
\begin{aligned}
&\bm{h}_{\gamma,k+1}^c,\ \textrm{if }m=2s_{k+1}-1,\\
&\bm{h}_{\gamma,k+1}^s,\ \textrm{if }m=2s_{k+1},\\
&-\bm{h}_{\gamma,k}^c,\ \textrm{if }m=2s_{k}-1,\\
&-\bm{h}_{\gamma,k}^s,\ \textrm{if }m=2s_{k},
\end{aligned}\right.
$$
and $\bm{c}_\gamma=[{\bm{c}}_{\gamma,1}^T,{\bm{c}}_{\gamma,2}^T,\ldots,{\bm{c}}_{\gamma,{K-1}}^T]^T$.

\section{Full Rankness of Matrices $\bm{H}_\rho$ and $\bm{H}_{\vartheta}$}
This section is to show, under Assumption~\ref{assp:regular}, matrices $\bm{H}_\rho$ in~\eqref{eq:opt_rho} and $\bm{H}_{\vartheta}$ in~\eqref{eq:qcqp} are of full rank with probability one. 


First, we note that the structures of $\bm{H}_{\rho}$ and $\bm{H}_{\vartheta}$ are closely related to a binary matrix $\bm{B}\in\{ 0, 1 \}^{K\times M}$ (with $K>M$), where $K$ is the total number of measurements, and $M$ is the number of sensors. Each entry $\bm{B}_{km}=1$ represents that sensor $m$ has a measurement at time instance $k$, and $\bm{B}_{km}=0$ otherwise. Once $\bm{B}$ is of full (column) rank, then it is not hard to show that $\bm{H}_{\rho}$ and $\bm{H}_{\vartheta}$ are also of full (column) rank (this will be discussed later). By Assumption~\ref{assp:regular}, we know that each row of $\bm{B}$ is a one-hot row vector. Then, by applying proper row permutation of $\bm{B}$, we have 
$$
\bm{B} {\xrightarrow{\text{row permutation}}} \begin{bmatrix} \bm{I} \\ * \end{bmatrix}
$$ 
and hence $\bm{B}$ is always of full (column) rank. 
    
Next, we discuss the relation between $\bm{B}$ and matrices $\bm{H}_{\rho}$ and $\bm{H}_{\vartheta}$. We take $\bm{H}_\alpha$ as an example and the analysis is similar for other matrices $\bm{H}_{\rho}$, $\bm{H}_{\beta}$, $\bm{H}_{\gamma}$, and $\bm{H}_{\eta}$. First, we need to expand $\bm{B}$ to a $6K\times 2M$ matrix $\bm{G}$, where $1$ in the $k$-th row of $\bm{B}$ is replaced by $$\bm{H}_{k}=[\bm{h}_{\alpha,k}^c,\bm{h}_{\alpha,k}^s]\in\mathbb{R}^{6\times 2}$$ and $0$ in $\bm{B}$ is replaced by $\bm{0}\in\mathbb{R}^{6\times 2}$. Vectors $\bm{h}_{\alpha,k}^c$ and $\bm{h}_{\alpha,k}^s$ are defined in Appendix~\ref{app:deriv}-C2. According to the definition of $\bm{H}_{k}$, the event that $\bm{H}_{k}$ is not of rank 2 is \emph{with zero probability}, since this event requires that the  measurement noise $\bm{w}_k=(w_k^\rho,w_k^\phi,w_k^\eta)$ (contained in $(\rho_k, \phi_k, \eta_k)$) satisfies the following equation 
$$\bm{h}_{\alpha,k}^c=C\bm{h}_{\alpha,k}^s$$ 
for a constant $C \in \mathbb{R}$. To be specific, denote $$(x,y,z)=R_x(\alpha_{s_{k}})R_y(\beta_{s_k}+\Delta \beta_{s_k}) R_z(\gamma_{s_k}+\Delta \gamma_{s_k})\bm{u}_k.$$ Then, we see that $(x,y,z)$ is uniquely determined by the polar coordinate $(w_k^\rho,w_k^\phi,w_k^\eta)$ and the above equation becomes 
$$
y=-Cz~\text{and}~z = Cy.
$$ It is easy to check that the above linear equation (in terms of $y$ and $z$) has the unique solution $z=y=0.$ However, as $(w_k^\rho,w_k^\phi,w_k^\eta)$ are continuous random variables,  the event that $z=y=0$ happens is of zero probability. 

Finally, let $\bm{G}_{k:l}$ denote the submatrix stacked by the rows of $\bm{G}$ from index $k$ to $l$ ($k\leq l$). Then by Assumption 1 (i.e., each sensor has at least one measurement and $K\geq M+1$), we know that $\bm{G}_{1:6(K-1)}$ is of full (column) rank with probability one. Performing proper elementary row operations for $\bm{G}$, we get $\tilde{\bm{G}}$, where
$\tilde{\bm{G}}_{6(k-1)+1:6k}=\bm{G}_{6(k-1)+1:6k}-\bm{G}_{6k+1:6(k+1)}, \ k=1,2,\ldots,K-1$
and $\tilde{\bm{G}}_{6(K-1)+1:6K}=\bm{G}_{6(K-1)+1:6K}$. 
Due to the above simple row operations and the fact that that $\bm{H}_{k}-\bm{H}_{k'}$ (with $k'\neq k$) is of full (column) rank with probability one (which can be proved using the same argument as in the proof that $\bm{H}_{k}$ is of full column rank), one can show that $\tilde{\bm{G}}_{1:6(K-1)}$ is also of full (column) rank with probability one. Noticing that $\bm{H}_{\alpha}=\tilde{\bm{G}}_{1:6(K-1)},$ we immediately obtain that $\bm{H}_{\alpha}$ is also of full (column) rank with probability one.

Finally,  we use the following toy example to illustrate the above proof. Suppose that there are two sensors and three measurements. Without loss of generality, we assume that sensor $1$ has measurement at time instance $1$ and sensor $2$ has measurement at time instance $2.$ Now there are two different cases where sensor $1$ has measurement at time instance $3$ and sensor $2$ has measurement at time instance $3.$ We first consider the case where sensor $1$ has measurement at time instance $3.$ In this case, we have $$\bm{G}=\left[\begin{array}{cc}
\bm{H}_{1}& \bm{0}  \\
\bm{0}& \bm{H}_{2}  \\
\bm{H}_{3}& \bm{0}  
\end{array}\right]
,~\tilde{\bm{G}}=\left[\begin{array}{ccc}
\bm{H}_{1}& -\bm{H}_{2}  \\
-\bm{H}_{3}& \bm{H}_{2}  \\
\bm{H}_{3}& \bm{0}
\end{array}\right]
$$ 
and 
$$\bm{H}_{\alpha}=\left[\begin{array}{ccc}
\bm{H}_{1}& -\bm{H}_{2}  \\
-\bm{H}_{3}& \bm{H}_{2} 
\end{array}\right].$$
One can show that $\bm{H}_{1}-\bm{H}_{3}$ is of full (column) rank with probability one and hence $\bm{H}_{\alpha}$ in the above is also of full (column) rank with probability one. In the second case where sensor $2$ has measurement at time instance $3$, we have $$\bm{G}=\left[\begin{array}{cc}
\bm{H}_{1}& \bm{0}  \\
\bm{0}& \bm{H}_{2}  \\
 \bm{0} & \bm{H}_{3}  
\end{array}\right]
,~\tilde{\bm{G}}=\left[\begin{array}{ccc}
\bm{H}_{1}& -\bm{H}_{2}  \\
\bm{0} & \bm{H}_{2} -\bm{H}_{3}  \\
\bm{0}& \bm{H}_{3}
\end{array}\right]$$
and 
$$\bm{H}_{\alpha}=\left[\begin{array}{ccc}
\bm{H}_{1}& -\bm{H}_{2}  \\
\bm{0} & \bm{H}_{2} -\bm{H}_{3}
\end{array}\right].$$ 
Again, $\bm{H}_{2} -\bm{H}_{3}$ and $\bm{H}_{\alpha}$ in the above are of full (column) rank with probability one.
\end{document}